\newtheorem{theorem}{Proposition}
\begin{document}

\title{A Trust-Centric Privacy-Preserving Blockchain for Dynamic Spectrum Management in IoT Networks}

\author{
	\IEEEauthorblockN{Jingwei~Ye,~Xin~Kang,~\IEEEmembership{Senior~Member,~IEEE},~Ying-Chang Liang,~\IEEEmembership{Fellow,~IEEE},~and~Sumei Sun,~\IEEEmembership{Fellow,~IEEE} }
	
	\thanks{J. Ye, X. Kang and Y.-C. Liang are with the Center for Intelligent Networking and Communications (CINC), University of Electronic Science and Technology of China (UESTC), Chengdu 611731, China (e-mail: {jingweiye97@gmail.com; ~kangxin@uestc.edu.cn; ~liangyc@ieee.org}).}
	\thanks{S. Sun is with the Institute for Infocomm Research, Agency for Science, Technology and Research, Singapore (e-mail: {sunsm@i2r.a-star.edu.sg}).}
}

\maketitle
\begin{abstract}
In this paper, we propose a trust-centric privacy-preserving blockchain for dynamic spectrum access in IoT networks.  
To be specific, we propose a trust evaluation mechanism to evaluate the trustworthiness of sensing nodes and design a Proof-of-Trust (PoT) consensus mechanism to build a scalable blockchain with high transaction-per-second (TPS). 
Moreover, a privacy protection scheme is proposed to protect sensors' real-time geolocatioin information when they upload sensing data to the blockchain.  
Two smart contracts are designed to make the whole procedure (spectrum sensing, spectrum auction, and spectrum allocation) run automatically.   
Simulation results demonstrate the expected computation cost of the PoT consensus algorithm for reliable sensing nodes is low, and the cooperative sensing performance is improved with the help of trust value evaluation mechanism. \looseness=-1
In addition, incentivization and security are also analyzed, which show that our design not only can encourage nodes' participation, but also resist to  many kinds of attacks which are frequently encountered in trust-based blockchain systems. 
\end{abstract}

\begin{IEEEkeywords}
Dynamic spectrum access, blockchain, trust model, consensus algorithm, cooperative spectrum sensing.
\end{IEEEkeywords}
\section{Introduction}

\IEEEPARstart{R}{ecent} years have witnessed the exponential growth of mobile data traffic. The rapid development of new wireless applications such as autonomous vehicles, remote healthcare, augmented and virtual reality will continue driving the demand for radio spectrum\cite{Cisco}.
However, the current static spectrum management method has resulted in under-utilization of spectrum resources\cite{FCC}.
To this end, dynamic spectrum access (DSA) has been proposed to  allow secondary users (SUs) sense and then use the idle spectrum bands. Since single SU's sensing capability is usually limited, cooperative sensing\cite{Akyildiz_CSS_Survey} is proposed to fuse the sensing results of multiple SUs to improve the sensing accuracy. However, the traditional usage of a centre to collect and fuse sensing results faces the risk of single point of failure and the increasing management and regulatory costs. Moreover, SUs usually need to upload their sensing data to the fusion center at the risk of leakage of their privacy, such as identity and location.
\par As an emerging technology, blockchain has attracted attention from both academia and industry.
By using blockchain, a decentralized resource management system can be established without the requirement of trustworthy central management agencies\cite{Access_Crowdsensing_Applications, TCOM_BC_Application, B4SDC, MCSChain, crowdbc}. Moreover, smart contracts which are implemented on a blockchain can be used to replace traditional central management agencies and facilitate the cooperation of users \cite{YANG2019408}. Recently, researchers have investigated how to apply the blockchain technology to DSA. \cite{TCCN} summarizes the types of blockchains applicable in different spectrum sharing scenarios, and the possible advantages and disadvantages of applying blockchain technology to DSA. With the help of cryptocurrency issued by the blockchain and  smart contracts, flexible and automatic spectrum trading markets are made possible for spectrum sellers and buyers. \cite{DBLP:journals/corr/PascaleMMD17,8761664,8851203,8885750,Spass}.
\par However, the integration of blockchain and DSA still faces many challenges.
Firstly, although the data recorded on a blockchain is prevented from being tampered, the quality or value of it cannot be guaranteed. Especially, in a public blockchain, a malicious user can easily join the blockchain and record their data in the blockchain. The data from such a malicious user or an unreliable user is valueless or even harmful to cooperation  tasks based on the data, e.g., cooperative sensing. Therefore, there is compelling need to evaluate the quality of the data from each user.
Secondly, though the communication among Ethereum external accounts can be protected by encryption and decryption with their public and private key pairs, a contract account is not equipped with
a key pair hence such protection method is not applicable to smart contracts. As a result, sensing results uploaded to the smart contract cannot be encrypted, and the sensitive information in the sensing results (such as real-time geolocatioin), can thus be accessed by malicious users.
Thirdly, traditional consensus algorithms like Proof of Work (PoW) used in public blockchain introduce much computation overhead and thus make the transaction processing speed very low. As a result, it is inefficient to directly use existing consensus algorithms (such as PoW) for spectrum management, especially considering the limited computation capabilities of IoT devices.
\par In this paper, we consider the DSA for an IoT network where IoT devices opportunistically access the licensed spectrum bands through cooperative sensing. Blockchain is used as a platform for dynamic spectrum management. Specifically, key component designs for such a blockchain-enabled DSA system, including the trust evaluation mechanism, privacy protection, consensus algorithm and smart contracts, are studied. The main contributions of this paper are summarized as follows.
\begin{itemize}
    \item  We propose a trust evaluation mechanism to evaluate the trustworthiness of sensing nodes which participate in the collaborative sensing. We show that the proposed trust evaluation mechanism is effective in incentivizing sensing nodes to be honest, and thus improving the spectrum sensing result. 
    \item We propose a strong privacy protection mechanism by combining ring signature with the commit-and-reveal scheme for providing sensing nodes' privacy such as their real-time location. The ring signature is used to protect sensing nodes' identities when they upload sensing data. The commit-and-reveal scheme is proposed to figure out  the connection between sensing data packets and its originated sensing node after the fusion of sensing results, so that the trust value of each sensing node is updated and the incentive tokens are distributed.
    \item We propose a new consensus algorithm named as Proof-of-Trust (PoT) by connecting the mining difficulty with a miner's trust value. We show that the proposed PoT can greatly reduce the computation cost of honest nodes and enhance the scalability of current blockchian-based DSA system. 
    \item We design the system protocol and two smart contracts to make the whole DSA procedure, including spectrum sensing, spectrum auction, and spectrum allocation, run automatically. In addition, we implement and verify a prototype of our proposed protocol and smart contracts using Solidity\cite{Solidity} on the Ethereum test net.
\end{itemize}
\par The rest of the paper is organized as follows. In Section \ref{sec:preliminaries}, we introduce the basic concept of blockchain and smart contract. In Section \ref{sec:Related work}, we describe the works related to the application of blockchain to DSA. In Section \ref{sec: system}, we introduce our system model. In Section \ref{sec:Proposed protocol}, we describe the design of our proposed protocol including the block structure, the trust evaluation mechanism, the consensus algorithm and the privacy protection scheme. In section \ref{sec:Protocol_and_implementation}, the design of smart contract and workflow are introduced. In Section \ref{sec:System analysis}, we discuss the performance of the trust evaluation mechanism, and analyze the security of the system. Finally, Section \ref{sec:conclusion} concludes this paper. \looseness=-1

\section{Preliminaries} \label{sec:preliminaries}
\subsection{Blockchain}
\par Blockchain is a growing chain of data blocks with a specific data structure constructed using cryptographic algorithms. It uses hash pointers to connect the data blocks to form an entangled chain. In this way, the integrity of the data can be protected. Blockchain uses a measurable and verifiable mechanism to reach a consensus among nodes in the network on the generation of new blocks. Such a mechanism is usually referred to as the \textit{consensus algorithm}. On the premise that the data block entanglement is guaranteed, the structure and the consensus algorithm of the blockchain can be flexibly designed based on the demand of application scenarios.
\par Blockchain can be roughly classified into three types, depending on degree of openness, which are public blockchain, private blockchain, and consortium blockchain. Public blockchain is designed to enable and to record the transactions in a public network, which means that any node can freely join or leave the blockchain network without authorization. Bitcoin, the most famous digital cryptocurrency, is generated and managed by a public blockchain. Consortium blockchain is a permissioned blockchain, which is open to authorized members of external institutions in limited roles and functions. The authorization of the node to join are all determined by an authorized organization. The nodes in a private blockchain network  trust each other. Thus, private chains can simplify operations in data authentication and the consensus algorithm to improve the efficiency.
\subsection{Smart Contract}
In the 1990s, Nick Szabo first proposed the concept of smart contracts\cite{Szabo_1997}. It was defined as computerized transaction protocols that execute terms of a contract. In the field of cryptocurrency, a smart contract is defined as an application or a program that runs on a blockchain. Normally, it contains a set of digital agreements with specific rules. These rules are predefined in the form of computer source codes, and all network nodes will copy and execute these computer source codes independently. Smart contracts are highly customizable and can be flexibly designed to provide different services and solutions.

\subsection{Ring Signature}
Ring signatures were first introduced in 2001\cite{RSIG} which follow a ring-like structure of the signature algorithm. It is a type of digital signature that can be performed by any member of a group of users without the agreement of others. Therefore, the only ``open'' information is that a message signed with a ring signature is endorsed by someone in a particular group of people. Ring signatures are deliberately designed so that it is computationally infeasible to determine which of the group members' keys was used to produce the signature. Ring signatures are similar to group signatures but differ in two key ways: firstly, there is no group administrator to revoke the anonymity of an individual signature; secondly, any set of users can be used as a group by the signer without additional setup.

\section{Related work} \label{sec:Related work}

\par Blockchain was first proposed as a secure decentralized ledger for recording spectrum transaction data and analyzing overall spectrum usage in dynamic spectrum management.
In \cite{Kotobi}, the authors proposed to use the blockchain as a distributed database to securely store the transactions in dynamic spectrum sharing, and to use the digital currency issued by the blockchain as the currency for spectrum auctions. In \cite{Spass}, a smart contract running on Ethereum was proposed as a platform to provide spectrum sensing services.

In \cite{8851203}, a blockchain-based spectrum trading and sharing scheme was proposed for Mobile Network Operators (MNOs) to lease their spectrum to secondary Unmanned Aerial Vehicle (UAV).
In \cite{axiveSC}, spectrum sharing contracts deployed on a permissioned blockchain platform are constructed for multi-operators spectrum trading.
In \cite{8885750}, the authors proposeds to use the smart contract to securely store and then autonomously implement the spectrum leasing agreements, thus to avoid the interference from the disordered spectrum access.
\par However, these works did not make any improvement to the existing blockchain.  Thus, the TPS of the blockchain is low, limiting the performance of the blockchain-based DSA platform. Thus, 

in \cite{KM-BC}, the authors proposed a consensus algorithm named as "blockchain-KM protocol" by modifying the PoW, to speed up the transaction processing for the license-free spectrum sharing.
In \cite{Proof-of-Strategy}, a novel consensus algorithm named as proof-of-strategy was proposed, where the best strategy to allocate the spectrum bands in the whole network is the proof of authority to publish a new block. Such a consensus algorithm can not only regulate the transaction generation, but also optimize the spectrum allocation.

In \cite{Proof-of-Device}, by modifying the PoW, the authors proposed a consensus algorithm named as Proof-of-Device, which selects a device, represented by its unique identifier, to publish a new block in a way like lottery. It thus eliminates the heavy computation cost in PoW.

On the other hand, blockchain system is an open system, hence the reliability of data source needs to be guaranteed. 
In \cite{senseChain}, the data uploaded by the sensors are compared with the data collected by the trustworthy validator to determine the credibility degree of the sensor, in this way, the system can evaluate the credibility of nodes for better performance. However, this method is a semi-centralized method which needs to deploy trustworthy validators.

Different from the above works, we not only delve into the design of block structure and consensus algorithm, intending to overcome the scalability problem of the blockchain technology, but also design a decentralized trust evaluation mechanism and a privacy protection mechanism for the openness and transparency of blockchain system while protecting the data's confidentiality and the sensors' privacy.
To the best of our knowledge, this is the first work of joint trust evaluation and privacy protection mechanism in cooperative sensing in a blockchain-enabled DSA system.

\section{System Model, challenges and design guidelines} \label{sec: system}
In this section, we introduce the dynamic spectrum sharing scenario in the future IoT network, and list the difficulties and challenges in applying the blockchain to this scenario. Finally, we give the guidelines on how to design a blockchain-based DSA system.
\subsection{Blockchain-enabled DSA System Model}\label{subsec:scenario}

In this paper, we consider the DSA in a two-tier cognitive radio network consisting of licensed/primary users (PUs) and unlicensed/secondary users (SUs). In this network, PUs are inactive users who may not use their spectrum all the time so SUs can detect these spectrum holes and use them. As illustrated in Fig.2, SUs can be a variety of heterogeneous IoT devices, which form a mesh network. SUs access the spectrum bands of a PU in an opportunistic manner based on cooperative sensing, and a blockchain is used as the platform of the cooperative spectrum sensing decision and the spectrum allocation.
\begin{figure}[htbp]
\centering
\includegraphics[width=3in]{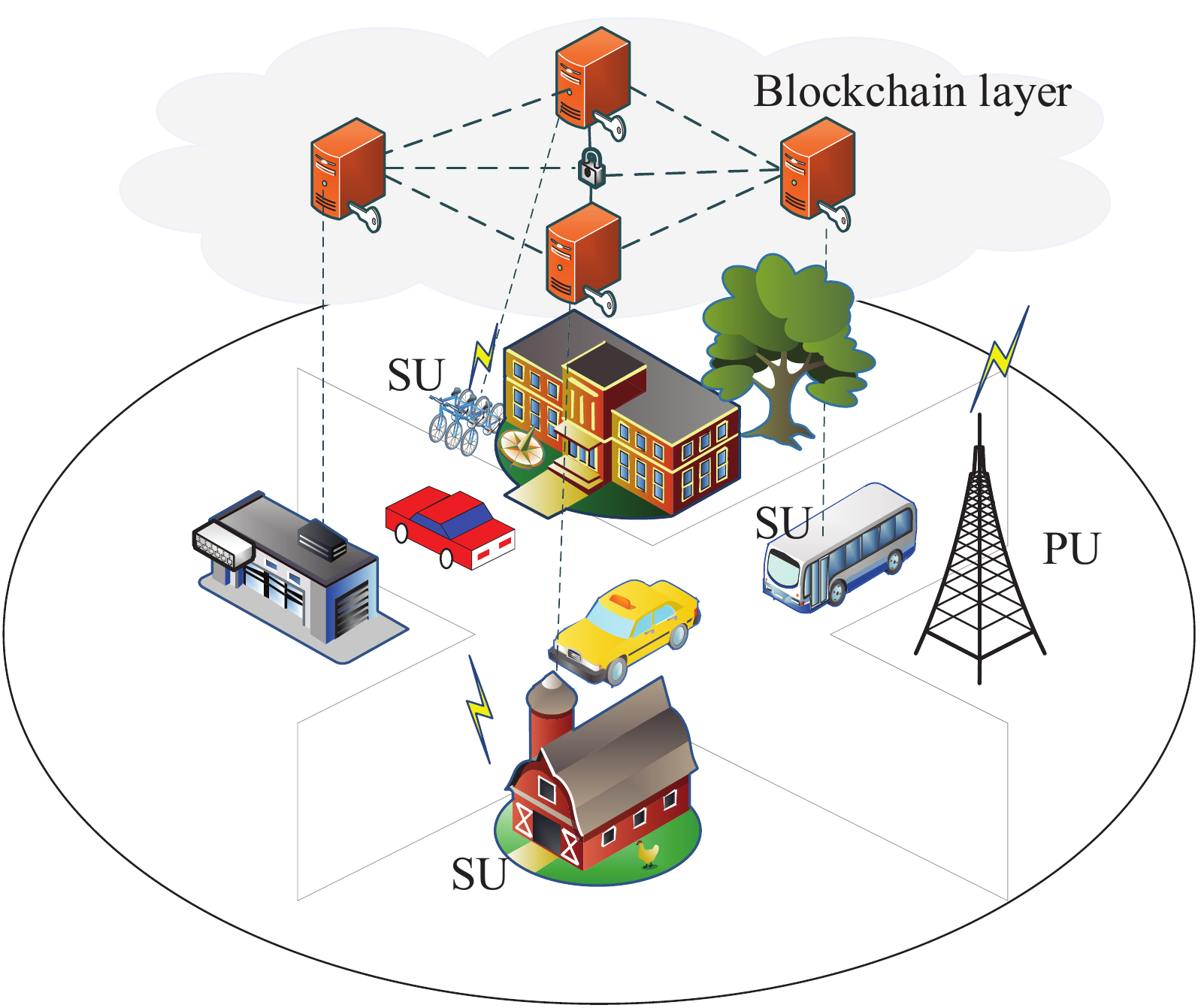}
\caption{System model} \label{CRnetwork}
\end{figure}

Basically, as shown in Fig. 3, there are four phases in our blockchain-enabled DSA system.
\begin{enumerate}
\item \textit{Individual Sensing}: The sensors perform spectrum sensing to detect the state of the PU's spectrum band.
\item \textit{Sensing Fusion}: Each of the sensor broadcasts its sensing data which mainly consists of the sensing result and the geolocation where the sensing is performed. These data can be raw data or quantized bits, which will be fed to a smart contract implemented upon a blockchain. Then, the smart contract provides the final sensing result according to predefined fusion rules.
\item \textit{Spectrum Allocation}: When the final sensing result indicates that PU is inactive, it should be decided which spectrum requester can access the spectrum. Spectrum auction, as a common and fair way for spectrum allocation, is chosen to achieve that. In the spectrum auction, the digital cryptocurrency, which is issued by the blockchain through incentive mechanisms for spectrum sensing and mining, is commonly considered.
\item \textit{Spectrum Access}: The SU who obtains the access right through spectrum auction accesses the idle spectrum for data transmission.
\end{enumerate}

\par Compared with traditional cooperative spectrum sensing and spectrum allocation methods, blockchain-enabled DSA has the following benefits:
\begin{itemize}
\item \textit{Decentralized}: Blockchain-based dynamic spectrum access does not require the deployment of trusted central nodes which avoids the single point of failure.
\item \textit{Transparency}: The blockchain ledger can record the whole process of DSA.
\item \textit{Automatic}: Using smart contract instead of traditional contracts, we can achieve automatic spectrum management and on-chain payment settlement.
\item \textit{Flexibility}: For spectrum trading on the blockchain platform, diversified spectrum trading rules can be dynamically enforced by adjusting parameters of smart contracts. Compared with the current fixed paper-based contracts signed between mobile users and operators, this method has higher flexibility.
\end{itemize}
\begin{figure}[htbp]
	\centering
	\includegraphics[width=3in]{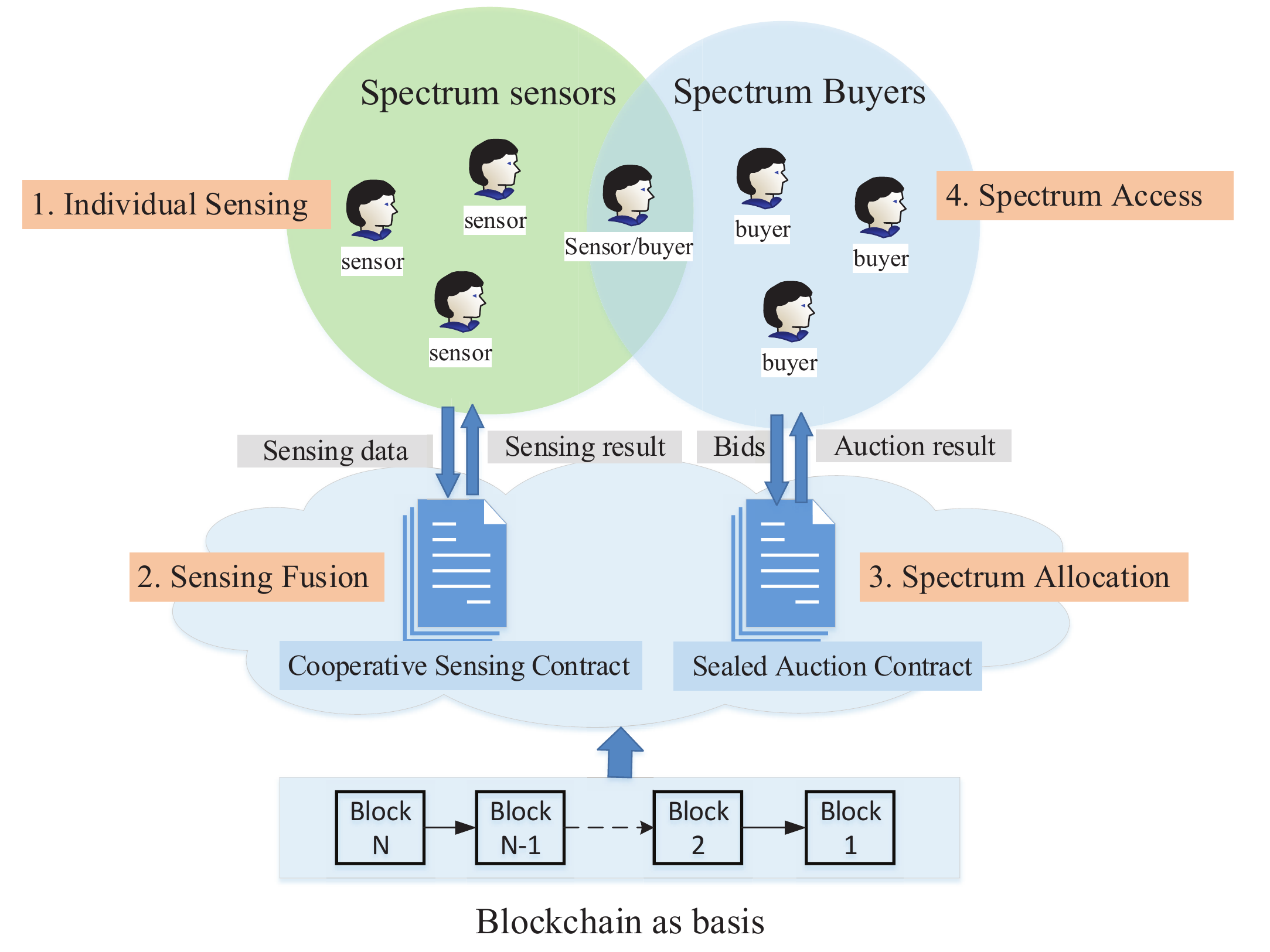}
	\caption{Blockchain based DSA}\label{BCBasedDSA}
\end{figure}

\subsection{Difficulties and Challenges}\label{subsec:challenge}
\par Traditional cooperative spectrum sensing is divided into two modes: centralized mode and distributed mode.
Centralized collaborative spectrum sensing requires the deployment of a central fusion node to process the data collected from sensors which usually contain sensors' geolocation information. 
Therefore, this mode not only requires redundant regulation fees (usually charged to maintain the operation of the central node), but also faces the risk of data leakage when the central node is attacked.
While the distributed mode often suffers from the fake reporting issue caused by malicious nodes and the iterative-based malicious node detection algorithms \cite{TCCN-Trust} introduce extra computation cost.
\par Therefore, how to implement the cooperative sensing in a secure and effective way is a challenge.
The decentralized nature of the blockchain system is promising to realize distributed cooperative sensing and encourage all sensors to share spectrum sensing results. However, there are still some challenges when applying blockchain into spectrum management.
Firstly, IoT devices usually lack sufficient computing and storage resources to maintain and store the whole blockchain ledger.
Secondly, a malicious node can easily join a public blockchain network and threaten the reliability of the cooperative spectrum sensing.
Moreover, the sensing-related data recorded on blockchain or collected by smart contracts is usually in an unencrypted format which may leak the private information of IoT devices contained in their sensing data packets.

\subsection{Design Guidelines}
Considering above challenges, our discussion on system design mainly includes the following four parts.
\begin{itemize}
    \item \emph{Trust Evaluation}: Specially, in a public blockchain network, anyone can join or exit the system at will, and one user can create multiple accounts in the system. Therefore, if there is no suitable node evaluation mechanism, malicious users can launch sybil attacks easily. This is because, malicious nodes can keep on applying for accounts and continue launching attacks. Therefore, the system should be able to evaluate the trust values for nodes, and identity the malicious nodes based on the their trust values. 
    \item \emph{Lightweight Consensus Algorithm}: Due to the limitation of the computing power and storage resources of IoT devices, maintaining a normal public blockchain system is a heavy burden. To reduce the computing cost, it is necessary to invent a new consensus algorithm with low computational and storage complexity.
    \par On the other hand, as time goes by, the length of the ledger keeps on increasing, resulting in increased node storage. To reduce the storage cost, methods such as edge storage can alleviate storage pressure on the blockchain nodes but may introduce additional budget. Thus, a method that can directly reduce the storage cost of the blockchain is preferred.
    \item \emph{Privacy Protection Mechanism}: Smart contracts does not have public-and-private key pair and thus can not communicate with external accounts in an encrypted manner. Therefore, when a smart contract is used as the data fusion center of cooperative spectrum sensing, a privacy protection mechanism is needed to ensure that the private information of the sensors will not be leaked during the interaction between sensors and the smart contract.

\end{itemize}

\section{Blockchain Design, Trust Evaluation and Privacy Protection} \label{sec:Proposed protocol}
In this section, to improve the accuracy of cooperative sensing, 
we first introduce the block structure of our blockchain.
We then propose the trust evaluation mechanism to evaluate the trustworthiness of each SU on spectrum sensing.
After that, a high-efficiency consensus algorithm is proposed for our blockchain.
Finally, a privacy protection mechanism is proposed for protecting participants' privacy.

\begin{table}[h!t]
\scriptsize
\caption{List of notations  }
\label{tab4}
\begin{tabular}{|l|l|}
\hline
Notations & Descriptions \\
\hline
  $N_{i,w}(n)$ & Accumulated number of wrong sensing results of $sensor_i$ at round $n$ \\
  $N_{i,r}(n)$ & Accumulated number of right sensing results of $sensor_i$ at round $n$ \\
  $P $ & Length of window for calculating $N_{i,w}(n)$ \\
  $\mathcal{D}^i_n $ & Mining difficulty of node $i$ at block $n$\\
  $\mathcal{H(\cdot)} $ & Hash operation \\
  $R_{sleep} $ & Number of inactive rounds since last active round \\
  $RND_i $ & Random number picked by $sensor_i$ \\
  $SR$ & Sensing result of unknown but legal sensors \\
  $SR_i $ & Sensing result of $sensor_i$ \\
  $N_1$ & Number of required sensors in Cooperative Sensing Contract ($CSC$) \\
  $N_2$ & Number of required bidders in Spectrum Auction Contract ($SAC$) \\
  $TV_{thr}$ & Trust value threshold set in $CSC$ \\
  $T_{ddl}$ & Sensing deadline in $SAC$ \\
  $d_s$ & Deposit for sensing \\
  $d_a$ & Deposit for auction \\
  $T_{self-d}$ & Time for $SAC$ to self-destruct \\
  $TS$ & Timestamp \\
  $loc$ & Location of unknown but legal sensors  \\
\hline
\end{tabular}
\end{table}

\subsection{Block Structure}\label{subsec:BS}
The block structure should be tailor-designed for the blockchain-based DSA system. Some of the unnecessary components in the block header can be removed to reduce the overheads on the block transmission and storage and other necessary information, such as the trust value of the block generator, needs to be recorded in the block header to prevent possible attacks in the DSA system.
\begin{itemize}
\item Block header: As shown in Fig. \ref{Pic:BlockStructure}, The block header mainly contains (i) the merkle root of transactions; (ii) the merkle root of account states; (iii) the hash value of the header of the previous block, as the hash pointer of the chain structure; (iv) the trust value of the miner who mines this block; (v) the signature of the miner who mines this block; (vi) a \textit{Nonce}, which is the solution of Hash puzzle of the consensus algorithm; (vii) the timestamp when the block is mined.
\item Block body: The block body mainly stores transaction generated during spectrum allocation using merkle tree. Unlike the unspent transaction output (UTXO) \cite{UTXO} model used in Bitcoin, the system based on the account model usually contains an account tree to record the account states such as the balance and the trust value of each account. Merkle Patricia Trie (MPT) \cite{wood2014ethereum} which is a commonly adopted data structure, is used here to store the account states.
\end{itemize}

\begin{figure}[htbp]
\centering
\includegraphics[width=3in]{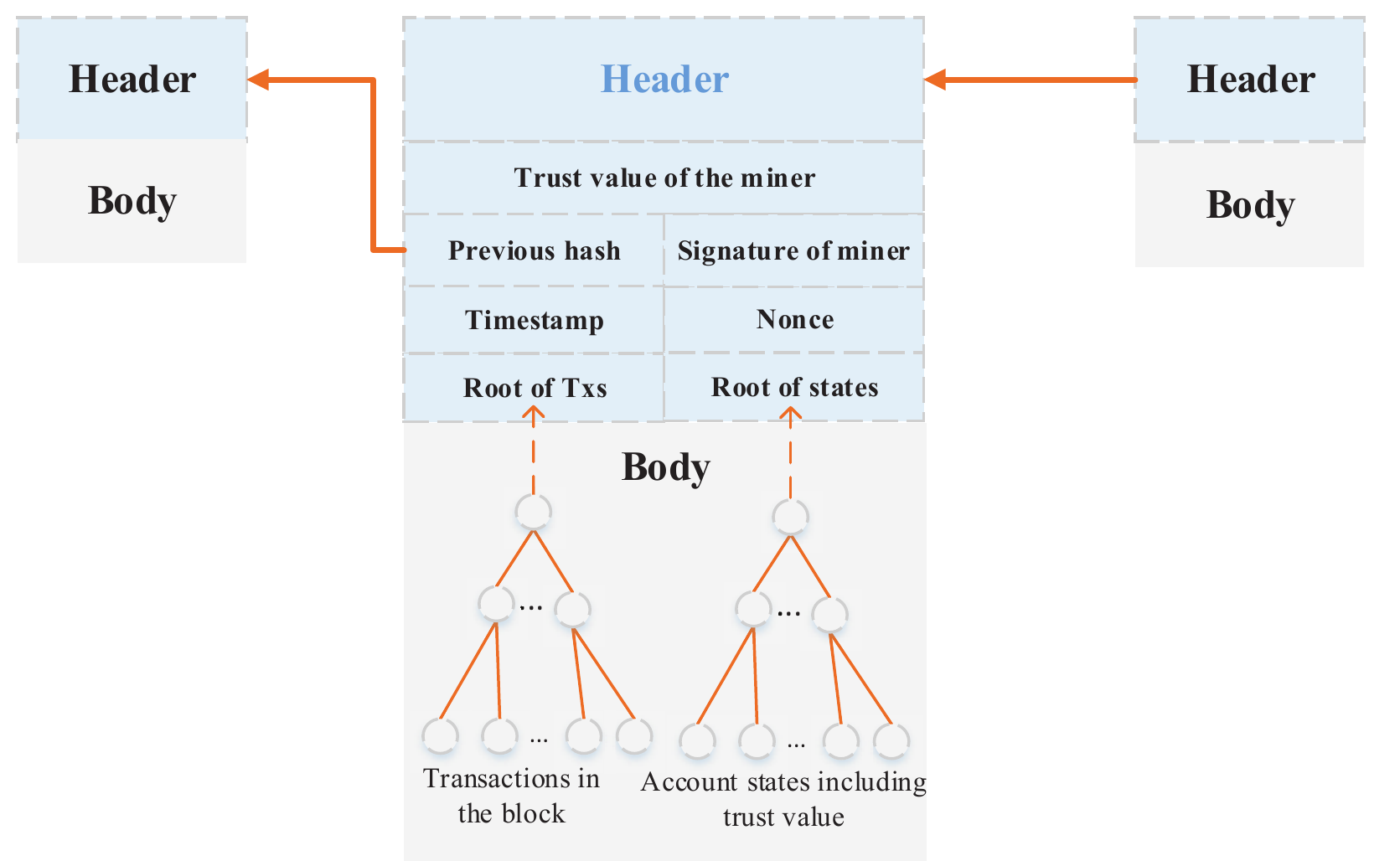}
\caption{Block structure} \label{Pic:BlockStructure}
\end{figure}
\subsection{Trust Evaluation Mechanism}\label{subsec:TV design}
Although the blockchain can guarantee that the data recorded in a block will not be tampered, it cannot guarantee that the data source is trustworthy.
Especially, in a public blockchain, a malicious node can upload misleading data which may interfere the cooperation sensing based on such data. In such cases, the performance of the cooperative sensing will be degraded, and the honest and reliable sensors will be discouraged from participating in cooperative sensing. To this end, we design a trust evaluation mechanism to evaluate the credibility of every sensing node in the blockchain.
Instead of using a central authority to record and maintain every account's trust value, we propose to include the trust value as an attribute of the blockchain account, which is recorded in the block and maintained by all the miners. In the following, we will discuss the design of trust value updating method.

\par Intuitively, the trust value of a sensing node should be adjusted based on its performance on the current and historical cooperative sensing. In detail, if it plays a positive role in the cooperative sensing, its trust value should be increased, and vice versa. Here, we define that the effect of a sensor is \textit{positive} when its sensing result is consistent with the cooperative sensing result, and is \textit{negative} when its sensing result is inconsistent with the cooperative sensing result, or when the sensing node does not participate in the cooperative sensing.
Denote the number of times when the sensing result of the \textit{i}-th node is consistent with the cooperative sensing result as $N_{i,r}$; the number of times when the sensing result of a node is inconsistent with the cooperative sensing result as $N_{i,w}$.
Inspired by \cite{KangTV}, we propose a new model to calculate trust value of $sensor_i$ denoted by $TV_i^{(1)}(n)$ in sensing round $n$, where $n$ is also the number of sensing tasks published since this system was created.
        \begin{equation}
        \begin{aligned}
        TV_{i}^{(1)}(n) = e^{-\rho  N_{i,w}(n)}\left(1 - e^{-\eta  N_{i,r}(n)}\right),
        \end{aligned}\label{con:tvc}
        \end{equation}
where $\rho>0$ and $\eta>0$ are coefficient that determine how fast the trust value changes with respect to $N_{i,w}$ and $N_{i,r}$ , respectively.
\par Moreover, we let $N_{i,w}(n)$  decay by $\frac{1}{P}$ every time when the node participates in sensing, and we only consider the latest $P$ sensing rounds of each node. In this way, the effect of a wrong sensing result on the trust value will gradually be degraded over time. As a result, a node that unintentionally submitting an inconsistent result will be gradually forgotten. Mathematically, $N_{i,w}(n)$ can be defined as 
\begin{equation}\label{Nwrong}
  N_{i,w}(n) = \sum\limits_{m=n-P}^{n} r_{i}(m)\left(1-\frac{n-m}{P}\right),
\end{equation}
where $r_{i}(m) = 1$ if $sensor_i$ broadcasts the wrong sensing result in round $m$, and $r_{i}(m) = 0$ otherwise.
\par Besides, we denote the number of rounds when a sensing node is inactive as $R_{sleep}$, and we add a function $f(R_{sleep})$ which models the negative impact of being inactive on the trust value. It is designed to satisfy the following criteria. 
\begin{itemize}
	\item ${\forall}R_{sleep} \in [1,+\infty)$, $f(R_{sleep}) \in (0,1]$,
	\item $\frac{\partial f_{dcv}}{\partial R_{sleep}} <0$,
\end{itemize}
\par The first criterion is for normalization; the second criterion is to ensure that the negative effect increases as  $R_{sleep}$ increases. 
Moreover, as $R_{sleep}$ increases, the downward trend of the function is first gentle, and then becomes severe. This is designed to punish the peers that does not participate in the sensing process. The punishment is light when the peer does not participate only a few times, but becomes severe when the node always does not participate in the sensing. $f(R_{sleep})$ will reach a certain low value when $R_{sleep}$ is big enough.
For example, the trust value of a node will reduce to $r_1$ ($r_1<1$) of its original value for $k_1$ consecutive non-participation, and to $r_2$ ($r_2<r_1<1$) for $k_2$ consecutive non-participation.
Here, a piecewise function consisting of multiple linear functions is utilized as our $f(R_{sleep})$.
\begin{equation}
	f(R_{sleep})=\left\{
	\begin{array}{lcl}
		\frac{k_1-R_{sleep}}{k_1}\cdot (1-r_1) + r_1 && {0 \leq R_{sleep} \leq k_1}\\
		\\
		\frac{R_{sleep}-k_2}{k_1-k_2}\cdot (r_1-r_2) + r_2  && {k_1 \leq R_{sleep} \leq k_2}\\
		\\
		r_2 && {k_2 < R_{sleep}}\\
	\end{array} \right.
\end{equation} 

These parameters can be set flexibly. If the actual deployment requires stricter penalties, then all these parameters should be set smaller.

\par By adding the attenuation function, the model in (\ref{con:tvc}) can be modified as:
\begin{equation}
TV^{(2)}_i(n)=
\begin{cases}
TV_{i}^{(2)}(n-1)+ \Delta TV_{i} \cdot f(R_{sleep}),&\mbox{if $\Delta TV_{i}>0$}\\  \\
TV_{i}^{(1)}(n),&\mbox{if $\Delta TV_{i}<0$}\\  \\
TV_{i}^{(2)}(n-1)\cdot f(R_{sleep}),&\mbox{if inactive}
\end{cases}\label{con:tvc2}
\end{equation}  where $\Delta TV_{i}$ = $TV_{i}^{(1)}(n)-TV_{i}^{(2)}(n-1)$ is defined as the original increment of trust value, which can be either positive or negative.
The design criterion of (\ref{con:tvc2}) is explained as follows. We consider three cases. First of all, when the sensor senses correctly, the original increment of trust value is positive, i.e.,  $\Delta TV_{i}>0$. Such increment is first degraded by the negative effect of sleeps in the previous sensing activities, i.e., $f(R_{sleep})$, and then added to the trust value. Secondly, when the sensor gives the wrong sensing result, the original increment of trust value is negative, i.e., $\Delta TV_{i}<0$. In this case, we choose not to consider the negative effect of the sleep in previous sensing rounds, in order to encourage the sensors which unintentionally derive the wrong sensing result in this time. Finally, for sensors who do not participate in sensing, their trust value will not remain unchanged but gradually decrease as their sleep time increases. This design is to ensure that sensors can only maintain its high trust value by actively participating in spectrum sensing.

\subsection{Consensus Algorithm, Forking and Block Compression} \label{ subsec:Consensus algorithm}
\par Based on the last block of the block chain, nodes in the blockchain network will compete for publishing the next block, and only one block will be accepted as the next valid block.

The traditional consensus algorithm for public blockchain like PoW is computation-intensive, which may be inapplicable in IoT networks, where IoT nodes are usually with limited computational capabilities. One reason why the PoW is designed to be computation-insensitive is that it assumes there is nearly no trust among nodes in a blockchain network. Nevertheless, with the trust value mechanism proposed in this paper, we can evaluate the credibility of a node on spectrum sensing in the blockchain network. Therefore, based on the trust value, we can optimize our design of the blockchain consensus algorithm, forking solution and blockchain compression. 
\par \textit{Consensus  Algorithm:} Intuitively, the nodes with higher trust value on sensing are more likely to be honest in publishing a new block. Therefore, we can reduce the difficulty of such nodes to mine a new block. In this way, the computation consumption for reliable nodes will be decreased, and a block will thus be more quickly mined and published.

Based on PoW and trust value, we propose a Proof of Trust (PoT) consensus algorithm. Similar to the Proof-of-Stake (PoS), PoT assigns different nodes different mining difficulties. 
To be specific, for successful mining, the miner with the higher trust value is required to find a hash value with fewer leading zeros and vice versa. 

\par Mathematically, assuming that the difficulty of node $i$ at block $n$ is denoted as $\mathcal{D}^i_n$, which can be denoted as
\begin{equation}\label{equ:difficulty}
 \mathcal{D}^i_n= \beta_n \cdot \left(1 -sin\left(\frac{\pi}{2}\cdot TV_{i}\right)\right),
\end{equation}
where $\beta_n$ denotes the base difficulty of block $n$. $TV_i$ is the trust value of node $i$.  The initial difficulty is denoted as $\beta_{0}$. $\beta_{0}$ can be determined by evaluating the computing power of actual IoT devices. We will discuss this problem in the simulation section.
A node with $TV_i=0.8$ has about $0.049 \cdot\beta_n$ difficulty, which is 17 times less than a node with $TV_i=0.1$ whose difficulty is about $0.844 \cdot \beta_n$.
\par We denote the timestamp of block $n$ as $T_n$, the ideal time interval of two consecutive blocks is $T_0$. Then,we have
\begin{equation}
	q = \left \lfloor \frac{T_{n-1}-T_{n-2}}{T_0}\right \rfloor,
\end{equation}
where $q$ is defined as adaptive adjustment factor of base difficulty, and $g$ is defined as the difficulty adjustment granularity, which can be written as  
\begin{equation}
	g =\left \lfloor \frac{\beta_{n-1}}{128} \right \rfloor,
\end{equation}
The updating of $\beta_n$ can be denoted as 
\begin{equation}
	\beta_n = \beta_{n-1} - q\cdot g,
\end{equation}
 
\par \textit{Forking Solution:} As the mining speed increases, there may be multiple blocks mined at nearly the same time. Because of the communication latency, the block that is first mined might not be the first to be received by all the nodes in the blockchain network. Instead, the block that is first received and recognized by different nodes might be different. In this case, the blockchain \textit{forking} occurs. It is harmful and thus needs to be solved. In Algorithm \ref{alg1}, we propose a trust value based forking solution. It first compares the trust values in the head of blocks, and it then validates the block with the highest trust value. If the trust values of multiple blocks are the same, the block whose timestamp is earlier is selected as the valid block. If the timestamp is the same, it compares the hash value of the blocks, and then selects the block with the smallest hash value as the valid block. Since the probability of hash collision is negligible, Algorithm \ref{alg1} can select one valid block eventually.
\par \textit{Blockchain Compression:} Since IoT devices are usually with limited storage space, each time when the blockchain grows by $L$ blocks, compression will be performed. For the compression of blockchain, in the existing literature, the RSA accumulator as a data structure, which functions similarly to that of a Merkle tree, can be used to compress the blockchain \cite{boneh2019batching}. Another approach to compress blockchain is to use the chameleon hash function to replace the traditional hash function of the blockchain \cite{RedactableBC}. Here, using the trust value, we propose a compression method called as \textit{Trust-based Compression (TBC)}. The node with the highest current trust value will be authorized to compress the blockchain. To be specific, such a node will first extract the account tree in the last block as the body of the new block, calculate a hash value for the body, and finally combine the obtained nonce, miner's signature, and timestamp to form the block header. The obtained block is the new genesis block. Since each node stores the original blockchain, it is easy to check whether the account status has been tampered by the selected node during the compression process. After a node verifies the new genius block, it will clear the original blockchain.
\begin{algorithm}[t]
\caption{Block Selection}\label{alg1}
\hspace*{0.02in} {\bf Input:} 
the block to be compared: $Block_i$, $Block_j$;\\
\hspace*{0.02in} {\bf Output:} 
The winner block;
\begin{algorithmic}
\If{$TV_i < TV_j$}
    \State $Block_i$ wins.
\ElsIf{$TV_i > TV_j$} 
	\State $Block_j$ wins.
\Else
	\If{$Timestamp_i<Timestamp_j$}
    	\State $Block_i$ wins.
	\ElsIf{$Timestamp_i>Timestamp_j$}
		\State $Block_j$ wins.
	\Else
		\If{$\mathcal{H}(Block_i)<\mathcal{H}(Block_j)$}
    		\State $Block_i$ wins.
		\Else
    		\State $Block_j$ wins.
    	\EndIf
     \EndIf
\EndIf
\end{algorithmic}
\end{algorithm}

\subsection{Privacy Protection Mechanism}\label{Subsec:Privacy protection}

The location where a sensing node senses the spectrum bands of interest is useful for the fusion centre to cluster the sensing nodes and to improve the cooperative sensing accuracy \cite{locationFusion}. Therefore, alongside the sensing result, the location is needed to be uploaded by a sensing node. However, it is difficult to protect the location information of a sensing node from being leaked when a smart contract is used as the fusion centre. This is because a smart contract account is not equipped with a key pair which can be used by sensors to encrypt their upload data packet.
\par The privacy protection issue in this case is to hide the source of a sensing packet, i.e., from which sensing node the sensing packet comes. However, the sensing node cannot be allowed to be totally anonymous when submitting the sensing packet because this will make the cooperative sensing system vulnerable to malicious attacks. To this end, we propose the use of the ring signature \cite{RSIG} by each sensing node to hide the source of a sensing packet in a group of valid sensing nodes. Also, the smart contract as the fusion centre can identify the validity of each received sensing packet. Moreover, since sensing packets are unencrypted, fusing these sensing results can be carried out directly and automatically in the smart contract.
\par In the following, we illustrate the procedures of one sensor, denoted by $Sensor_s$, in generating the ring signature.
\begin{itemize}
\item[1.] $Sensor_s$ selects $n-1$ legal sensors to form a group and collect their public keys. The format of a sensing data packet, denoted as \textit{msg}, is given as follows 
\begin{equation}
	msg = \{\mathcal{H}(msgID), SR,time, location\}\label{equ:msg},
\end{equation} 
where $\mathcal{H}(msgID)$ is used in the update of the trust value and $SR$ is the sensing result denoted by one bit.
\item[2.] $Sensor_s$ uses one-way hash function to compute $k = \mathcal{H}(msg)$, which is a symmetric key of the symmetric encryption function $E_k$. 
\item[3.] $Sensor_s$ generates a random value for each of other members in the group where it belongs to. Specifically, the random number $x_i$ is first generated for the $i$-th node in the group. It then calculates corresponding $y_i$=$g_i(x_i)$ using corresponding public key, where the function $g_i(\cdot)$ is the encryption function encrypted with the public key $pk_i$.
\item[4.] $Sensor_s$ finds the solution to the ring equation (\ref{equ:ringEquation}) and gets the undetermined parameter $y_s$, where $v$ is a random value chosen by $Sensor_s$. Then, $Sensor_s$ calculates $x_s$ using its own privacy key: $x_s$=$g_{s}^{-1}(y_s)$. To find solution, a private key of a sensor in this group is needed , anyone who is not in the group cannot generate a legal ring signature.
    \begin{equation}
    \begin{split}
       C_{k,v}(y_1,y_2,...,y_n) & =E_k(y_n \oplus E_k(y_{n-1} \oplus E_k(...  \\
         & \oplus E_k(y_1\oplus v)...)))=v,
    \end{split}\label{equ:ringEquation}
    \end{equation}
\end{itemize}
\par Finally, a valid ring signature, denoted as $Ring_{sig}$, can be generated by $Sensor_s$.
\begin{equation}\label{OriginalRSIG}
  Ring_{sig}=(pk_1,pk_2,...,pk_n,v,x_1,x_2,...,x_n),
\end{equation}
\par The signature verification by the smart contract as the fusion centre involves the following three steps.
\begin{itemize}
  \item[1.] Calculate all $y_i$ using corresponding public key $pk_i$ and $x_i$;
  \item[2.] Calculate the symmetric key $k = \mathcal{H}(msg)$;
  \item[3.] Verify that if $C_{k,v}(y_1,y_2,...,y_n)=v$ holds.
\end{itemize}
\par If the verification succeeds, the smart contract recognizes that $msg$ is sent by a valid sensing node from the group $\{Sensor_1, Sensor_2,..., Sensor_n\}$. In this way, we can cut the connection between the data packet and its corresponding owner.
\par  However, for the trust value update and payments assignment process, it is necessary to know the mapping of the sensing result and its corresponding sensor. To this end, we propose a two stage commitment scheme as follows.
\par This scheme consists of two stages including \textit{Commit Stage} and \textit{Reveal Stage}. 
As shown in Fig. \ref{Pic:PrivacyProtection}, at the commit stage, each sensor participating in the cooperative sensing needs to submit the $msg$ whose privacy is protected by the ring signature, and the field ``hash of $msgID$'', i.e. $\mathcal{H}(msgID)$, is used as a commitment. $msgID$ is an identifier which is the input of hash function, and can be any message such as "I am User 3" or "I like apples". 
At the reveal stage, each sensor directly uploads the unhashed $msgID$ for miners to verify the consistency. This checking mechanism is effective since hash function is
preimage resistance and second-preimage resistance \cite{Hash}, i.e., given a hash output $\mathcal{R}$=$\mathcal{H}(msgID)$, it is difficulty to find the input $msgID$ or another input $msgID'$ such that $\mathcal{H}(msgID)$ = $\mathcal{R}$ or $\mathcal{H}(msgID')=\mathcal{R}$. 

\begin{figure}[htbp]
\centering
	\includegraphics[width=3in]{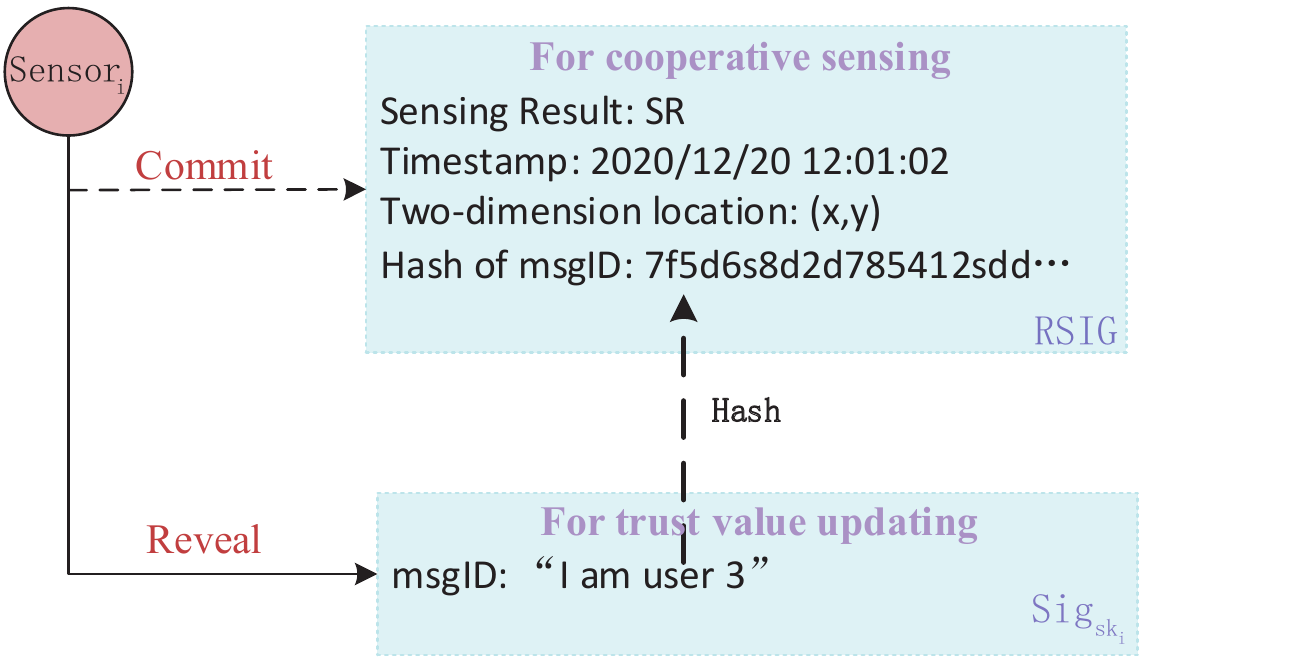}
\caption{Two-Stage commitment scheme.} \label{Pic:PrivacyProtection}
\end{figure}

\section{Smart Contracts and DSA Protocol Design} \label{sec:Protocol_and_implementation}
Smart contracts are computer codes that run on the blockchain platform, which are automatically executed when the predefined conditions are satisfied. Moreover, smart contracts are not controlled by any third party. Therefore, we propose to use smart contracts to realize the automated operation of cooperative spectrum sensing and spectrum auction on blockchain. In this section, we will give the designs of the corresponding smart contracts, and then propose the DSA protocol based on these smart contracts.

\subsection{Smart Contract Design}\label{subsec:SC}
We first discuss the design of our smart contracts for cooperative sensing and spectrum auction. In the following, we describe the parameters and functions in the two smart contracts, respectively.
\par \emph{Cooperative Sensing Contract (CSC)}: The parameters in CSC that need to be predefined are $T_{ddl}$, $d_s$, $N_1$ and $TV_{thr}$, which are introduced as follows.
\begin{itemize}
	\item  $T_{ddl}$: The deadline for a sensor to send its sensing packet to CSC; 
	\item $d_s$: The deposit that a sensor needs to pay before participating in sensing, which is used to be a guarantee for acting honestly. The deposit can be withdrawn only if the sensor uploads the sensing result which is consistent to the cooperative sensing result..
	\item $N_1$: The maximum number of sensors needed in cooperative sensing. If there are more than $N_1$ sensors who apply to participate in cooperative sensing, those with $N_1$ highest trust value will be selected.
	\item $TV_{thr}$: the minimum trust value of a sensor is needed to participate in cooperative sensing.
\end{itemize}
\par The functions in CSC are introduced as follows.
\begin{itemize}
	\item  $SensorRegister(\cdot)$: The inputs to this function are address $Addr$, deposit $dpt$ and trust value $TV$ of the node who invokes this function. This function will decide whether this node can register successfully by checking its trust value.
	\item $CheckRegisterQuality(\cdot)$: This function is used to check node's quality of participating in sensing by checking the parameters passed into it, i.e., node's trust value $TV_i$ and deposit $deposit_i$. For new users, participating in sensing is the only way to improve their trust value. This function offers users a way to convert their tokens to additional trust value, which enables the users whose trust value is below threshold to participate in sensing by paying more deposit.
	\item $Fusion(\cdot)$: The input to this function is $msgList$, which is a list consisting of all the sensing results from the registered sensors. This function outputs the final sensing result according to the pre-defined fusion rule.
	\item $UploadSensingData(\cdot)$: The inputs to this function are $msg$ and $RSIG$ , which are the sensing packets in (\ref{equ:msg}) and the ring signature of the node, respectively. This function is invoked by legal sensors to upload their sensing result. 
\end{itemize}
\par The pseudocode of the CSC is summarized in Algorithm \ref{contract:CSC}.

\par \emph{Spectrum Auction Contract (SAC)}: The parameters in SAC are as follows. 
\begin{itemize}
	\item $CSC_{id}$: The identity of corresponding CSC, which is used to identify the connection between SAC and CSC since every SAC is associated with a particular CSC. 
	\item $N_2$: The maximum number of the bidders, which is used to prevent too many nodes from sending bidding message.
	\item $T_{self-d}$: The time when this SAC will conduct the self-destruct operation to release memory.
	\item $d_a$: The amount of tokens that the bidders need to deposit before the final sensing result is released.
\end{itemize}
\par The functions in SAC are introduced as follows. 
\begin{itemize}
	\item  $BidderRegister(\cdot)$: Bidders invoke this function to register in SAC.
	\item $Commit(\cdot)$: The input to this function is $bldBid$ which denotes the commitment of $bidder_i$. Note that since the account balance is transparent, nodes can identify others' bid by checking their balance\cite{BlindAuction}. Therefore, everyone makes several commits in order to prevent others from inferring your bid based on changes of their account balance.
	\item $Reveal(\cdot)$: The inputs to this function include $Dps$, $Bools$ and $RNDs$. $Dps$ is the list of deposit ($Dp$) the bidders make; $Bools$ is the list of boolean values which indicate whether these bids are valid or not; $RNDs$ are random values to make the hash of  $\{Dp$, $Bool$, $RND\}$ hard to guess. A $Commit$ is the hash of these three parameters, i.e., $Commit$= $\mathcal{H}$$\{$$Dp$, $Bool$, $RND\}$. $bidsList[msg.sender]$ is the list of the invoker's commits. By comparing the reveal messages and the commits, this function can identify which bids are revealed correctly. Valid bids are added together as the bidder's total bid, invalid but correctly revealed bids are allowed to be withdrawn, bids that are not correctly revealed will not be returned.
	\item $Win(\cdot)$: The input to this function is $bidderList$, which is the list of all valid bids. Here we adopt the second-price
	sealed-bid auction for better economic profit \cite{Liang2020InBook}.  Accordingly, this function selects the bidder with the highest bid as the winner, who only needs to pay the second highest bid. 
	\item $EndOfAuction(\cdot)$: SAC will execute self-destruct operation at the preset time $T_{self-d}$.
\end{itemize}
\par The pseudocode of SAC is shown in Algorithm \ref{contract:SAC}.

\begin{algorithm}[tb]
	\caption{Cooperative Sensing Contract}\label{contract:CSC}
	\begin{algorithmic}[1] 
		\Function {init}{$TV_{thr}$, $d_s$, $N_1$} 
		\State require(msg.sender==ContractOwner);
		\State initialize $TV_{thr}$, $d_s$, $N_1$;
		\EndFunction
		\State
		\Function {SensorRegister}{$Addr, dpt, TV$}
		\State $sensor_{num} \gets 0$;
		\If {\Call{checkRegisterQuality}{$TV_i$, $addr_i$}}
		\State $sensorMap \gets sensor_i$;
		\State $sensor_{num} \gets sensor_{num}+1$;
		\State EMIT event("Registration success!");
		\Else
		\State EMIT event("Registration failed, please checkout the trust value and deposit.");
		\EndIf
		\If{$sensor_{num}>N_1$}
		\State Select top $N_1$ sensors according to trust value;
		\EndIf
		\EndFunction
		
		\State
		\Function{Fusion}{$msgList$}
		\State  Decision fusion in majority rule;
		\State \Return Cooperative sensing result;
		\EndFunction
		\State
		\Function{checkRegisterQuality}{$TV_i$,$deposit_i$}
		\If{ $deposit_i>d_s$}
		\State $TV^{'}_i$ $\gets$ $convert(deposit_i-d_s)$
		\If {($TV_i+TV^{'}_i)>TV_{thr}$ }
		\If {$totalNum<N_1$ \textbf{or} $TV_i>lowest TV$}
		\State Register Successful;
		\EndIf
		\Else
		\State Register Failed;
		\EndIf
		\EndIf
		\EndFunction
		\State
		\Function{UploadSensingData}{$msg$, $RSIG$}
		\If {$RSIG$ is legal}
		\State $msgList \gets msg$;
		\State EMIT event("Upload successfully!");
		\Else
		\State EMIT event("Upload failed, illegal sensor!");
		\EndIf
		\EndFunction
	\end{algorithmic}
\end{algorithm}

\begin{algorithm}[tb]
	\caption{Sealed Spectrum Auction Contract}\label{contract:SAC}
	\begin{algorithmic}[1] 
		\State $Bid \gets {bldBid_i, msg.value}$;
		\State $bidsList \gets mapping(address => Bid[]) $;
		\State $refund \gets 0$
		\State
		\Function {Commit}{$bldBid$}
		\State $bidsList[msg.sender].push(bldBid);$
		\EndFunction
		\State
		\Function {Reveal}{$Dps,Bools, RNDs$}
		\State $len \gets bidsList[msg.sender].length$;
		
		\While {$len>0$}
		\State $Commit \gets \mathcal{H}\{Dps[len], Bools[len], RNDs[len]\}$;
		\If{$Commit!=bidsList[msg.sender][len]$}
		\State EMIT event("Illegal Reveal!");
		\State Continue;
		\ElsIf{$Commit==bidsList[msg.sender][len]$ \textbf{and} $Bools[len]==true$} 
		\State $refund \gets refund + Dps[len]$;
		\ElsIf{$Commit==bidsList[msg.sender][len]$ \textbf{and} $Bools[len]==false$}
		\State withdraw invalid but reveal correctly bid;
		\EndIf
		\State $len \gets len-1$;
		\EndWhile
		\EndFunction
		
		\State
		\Function{Win}{$bidderList$}
		\State \Return Bidder with the highest bid;
		\EndFunction
		\State
		
		\Function{EndofAuction}{$T_{self-d}$}
		\State Execute self-destruct operation;
		\EndFunction
	\end{algorithmic}
\end{algorithm}

\subsection{The DSA Protocol} \label{subsec:Protocol}
In this part, we illustrate our proposed blockchain-based DSA protocol. To make it clearer, we elaborate this protocol in Fig. \ref{workflow}.
\begin{itemize}
	\item Phase 1 (\emph{Spectrum Sensing Request}): To ask for the channel state, SUs need to send a request message to Task Issuer (TI) through the control channel. TI is played by the node whose trust value is currently the highest. The role of TI is set to prevent multiple contracts from appearing in the network, which results in users participate in different contracts. TI will then creates and deploys the CSC and the corresponding SAC in the blockchain.
	\item Phase 2 (\emph{Smart Contracts and Nodes Register}): The CSC and SAC are instantiated when TI issues the transaction
	\begin{equation}\label{STC}
		CSC=\{CSC_{id} \mid T_{ddl} \mid N_1 \mid TV_{thr} \mid Fusion(\cdot) \mid d_s \}
	\end{equation} with the signature $Sig_{TI}(STC)$, and the transaction
	\begin{equation}\label{AC}
	SAC=\{CSC_{id} \mid SAC_{id} \mid  N_{2}  \mid T_{self-d} \mid Win(\cdot) \mid d_{a} \}
	\end{equation} 
	with the signature $Sig_{TI}$=$(AC)$. To register to CSC, a sensor issues the transaction
	\begin{equation}\label{Deposit}
	  Deposit_{CSC_{id}}=\{pk_i \mid TV_{i} \mid CSC_{id} \mid d_s\}
	\end{equation}  
	with the signature $Sig_{sk_i}(Deposit_{CSC_{id}})$ to make deposits. Similarly, a SU who intends to participate in SAC makes its deposit by issuing
	\begin{equation}\label{Bty}
	  Bty_{SAC_{id}} = \{pk_i \mid TV_{i} \mid SAC_{id} \mid d_a \mid \mathcal{H}(bid_i,RND_i)\}
	\end{equation} 
	with the signature $Sig_{sk_i}(Bty_{SAC_{id}})$. The $\mathcal{H}(bid_i, RND_i)$ denotes the bidding commitment this SU makes for the sealed auction.
	\item Phase 3 (\emph{Players selection}): Players (including sensors and bidder) are selected based on $SensorRegister(\cdot)$ and $BidderRegister(\cdot)$, respectively. Finally, an event will be triggered to provide an appropriate notification to every eventually selected nodes.
	\item Phase 4 (\emph{Sensing phase}): Sensors upload the sensing data by issuing the transaction
	\begin{equation}\label{SR}
	  msg_{tx}=\{CSC_{id} \mid \mathcal{H}(msgID) \mid TS \mid SR \mid loc\}
	\end{equation}  
	with the ring signature  generated by (\ref{OriginalRSIG}) before $T_{ddl}$, and $TS$ denotes timestamp, $SR$ denotes the sensing result. In this process, we use the ring signature to protect the SU’s identity. Meanwhile, these sensors need to make commitments mentioned in Section \ref{Subsec:Privacy protection} by sending the transaction 
	\begin{equation}\label{Commit}
   		Commit_i =\{CSC_{id}\mid \mathcal{H}(SR, RND_{i}, msgID) \mid pk_i)\
	\end{equation}  
	with $Sig_{sk_i}(Commit_i)$. After the sensing deadline, data fusion is done, and the final sensing result will be published in the blockchain network.
	\item Phase 5 (\emph{Auction if necessary}): If the cooperative sensing result indicates that the corresponding spectrum bands are idle, the SAC will be invoked. Users who have been selected will reveal their bids for the spectrum by sending bids and random numbers. Then the winner will be published on the blockchain.
	\item Phase 6 (\emph{Trust value updating}): In this phase, the trust value of each node is updated using equation (\ref{con:tvc2}) by miners in the network.

\end{itemize}

\begin{figure}[htbp]
	\centering
	\includegraphics[width=3.2in]{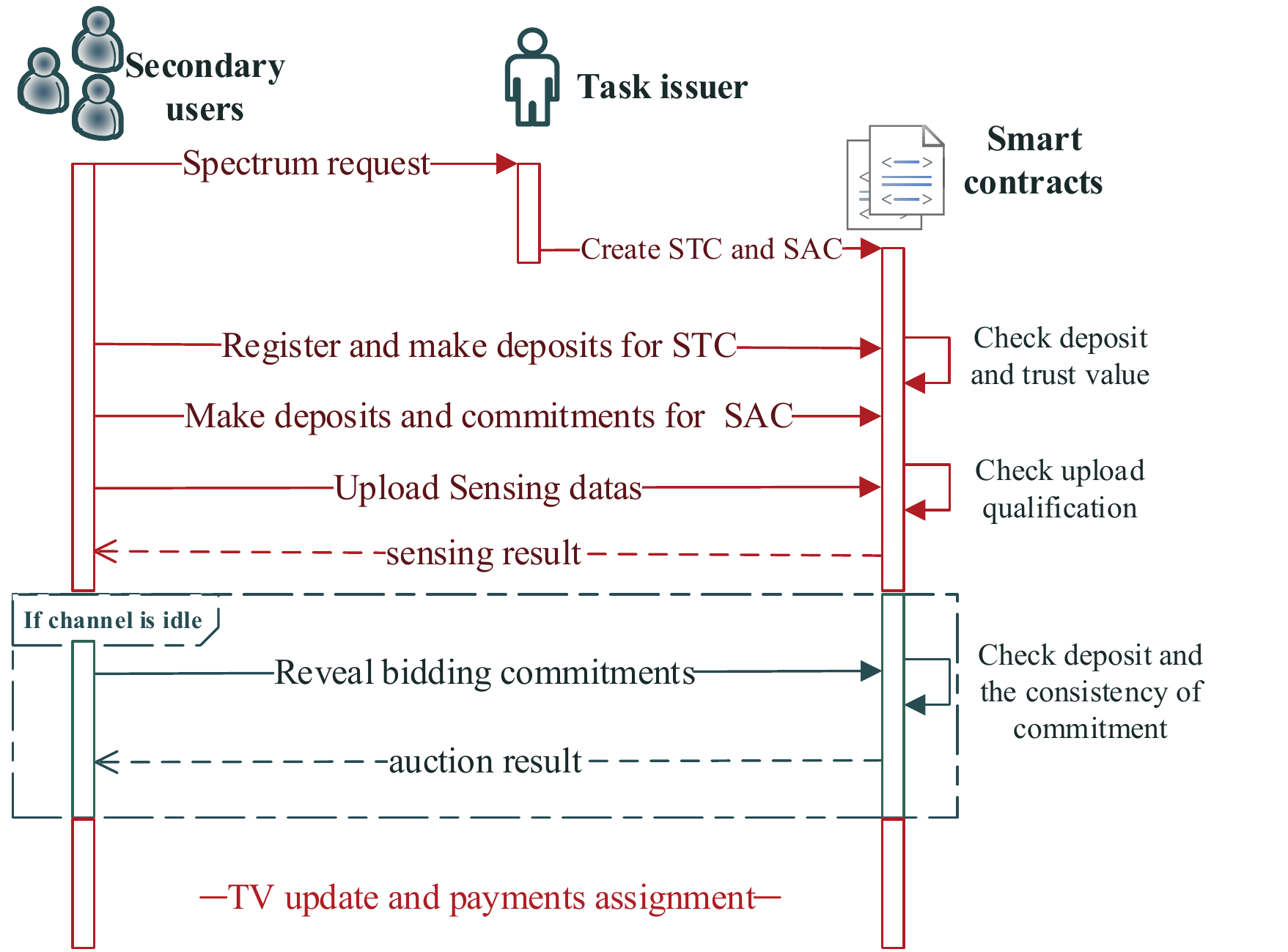}
	\caption{The workflow of our proposed blockchain-based DSA.} \label{workflow}
\end{figure}

\section{Implementation and Performance Analysis}\label{sec:System analysis}

\subsection{Implementation of Smart Contract}\label{subsec:Implementation}
We implement the proposed smart contracts using Solidity\cite{Solidity} in the Ethereum Virtual Machine (EVM). The smart contracts are compiled and deployed in the Remix IDE\cite{Remix} which is used for testing. After that, the nodes in the blockchain network can send transactions to deploy the smart contracts and invoke functions in these contracts. 
In the cooperative sensing phase, five sensors are considered and the information of their accounts in the blockchain network are listed in table \ref{accounts}. In the following, we describe the implementation procedures in detail.
\begin{enumerate}
	\item \emph{Smart Contract Preparation}: The smart contract is created and deployed by the node with the highest trust value, i.e., the fifth account in table \ref{accounts}. The relevant parameters and functions which need to be predefined are set as follows: $N_1$ is set as 3, the trust value threshold $TV_{thr}$ is set as 900 where we magnify the trust values by 100 times and make they as integers, since fixed point numbers are not fully supported by Solidity yet. $d_s$ is set as 100 wei,
	\item \emph{Smart Contract Deployment}: Fig.\ref{Contract_initialized} shows the details about the deployed smart contract. The field ``from" is the account address of TI. The field ``decoded input" shows all of our preset parameters. 
	\item \emph{Sensor Registration}: All the five sensors send their registration request to the smart contract. Then, three sensors are selected by the function $SensorRegister(\cdot)$. 
	\item \emph{Sensing Results Fusion}: The function $Fusion(\cdot)$, where the majority rule is selected, is used to fuse the sensing results from three legal sensors. 
\end{enumerate}
\par If the spectrum is detected as idle, the auction phase initiates. In this phase, we consider two bidders: each bidder has one true bid and one false bid. The related information is listed in Table \ref{table:bidder}. The first bidder has two bids: one is a true bid with 100 wei, the other one is a false bid with 200 wei. The second bidder also has two bids: one is true with 150 wei, the other is false with 300 wei.

\begin{table}[htbp]
	\centering  %
	\caption{Related information of sensor accounts}  %
	\resizebox{\columnwidth}{!}{
		\label{accounts}  %
		\begin{tabular}{|c|c|c|}
			\hline  %
			& \\[-6pt]  %
			Account addresses & Trust value & Sensing Result \\  %
			\hline
			& \\[-6pt]  %
			0x5B38Da6a701c568545dCfcB03FcB875f56beddC4 &  0.91 & 0 \\ 
			\hline
			0xAb8483F64d9C6d1EcF9b849Ae677dD3315835cb2 &  0.92 & 1 \\
			\hline
			0x4B20993Bc481177ec7E8f571ceCaE8A9e22C02db &  0.87 & 1 \\
			\hline
			0x78731D3Ca6b7E34aC0F824c42a7cC18A495cabaB &  0.93 & 0 \\
			\hline
			0x5B38Da6a701c568545dCfcB03FcB875f56beddC4 &  0.94 & 1 \\
			\hline
	\end{tabular}}
\end{table}

\begin{figure}[htbp]
	\centering
	\fbox{\includegraphics[width=3.4in]{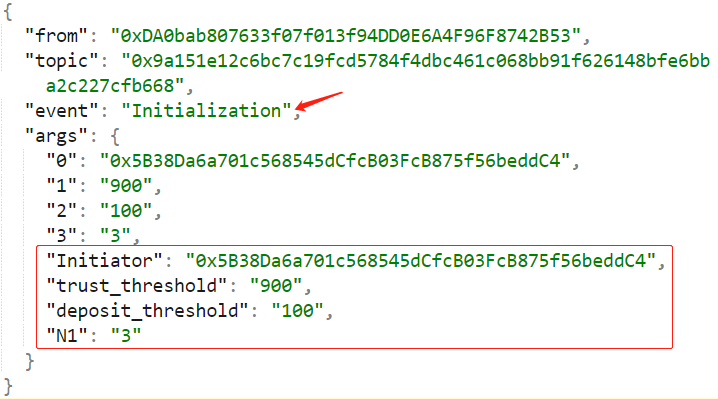}}
	\caption{Initialization of Smart Contract}\label{Contract_initialized}
\end{figure}

\begin{enumerate}
	\item \emph{Commit}: Bidders invoke $Commit(\cdot)$ function to make commits. Fig. \ref{Pic:BiddingCommit} shows the log of commit made by bidder 2. There are address information and bid information about this commit, thus others can know that bidder 2 make a bid with 150 wei, but they can not identify whether this is a real commit or not.
	\item \emph{Reveal}: After the commit phase, bidders invoke the function $Reveal(\cdot)$ to reveal its commits. The field ``decoded input'' in Fig. \ref{Pic:BiddingReveal} shows the whole information about this commit. The smart contract will automatically verify the information uploaded in the two phase and the final bidding result can be accessed by all nodes in the system.
\end{enumerate}


\begin{table}[htbp]
	\centering  %
	\caption{Related Information of Bidders}  %
	\resizebox{\columnwidth}{!}{
		\label{table:bidder}  %

		\begin{tabular}{|c|c|c|c|}
			\hline  %
			& \\[-6pt]  %
			Bidder address & Bid & isReal & Commit\\  %
			\hline
			& \\[-6pt]  %
			\multirow{2}{*}{0xAb84...5835cb2} &  100wei & yes & 0x591a291ad67...1b62c96a2092 \\ 
			\cline{2-4}
			~ &  200wei & no & 0x4871a30b671...4d08cfaa8bc1 \\ 
			\hline
			\multirow{2}{*}{0x4B20...2C02db} &  150wei & yes & 0x4e6a24e35c7be...23643d68efa \\
			\cline{2-4}
			~ &  300wei & no & 0x0c98edecac...74d24d971ec88 \\
			\hline
	\end{tabular}}
\end{table}	

\begin{figure}[htbp]
	\centering
	\fbox{\includegraphics[width=3.4in]{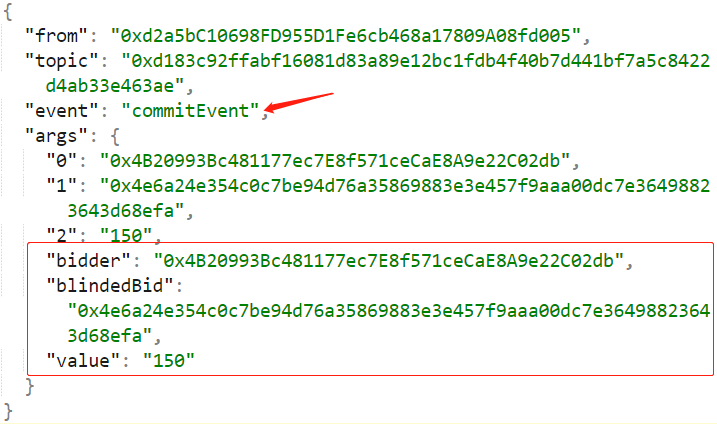}}
	\caption{The commit made by bidder 2} \label{Pic:BiddingCommit}
\end{figure}

\begin{figure}[htbp]
	\centering
	\fbox{\includegraphics[width=3.4in]{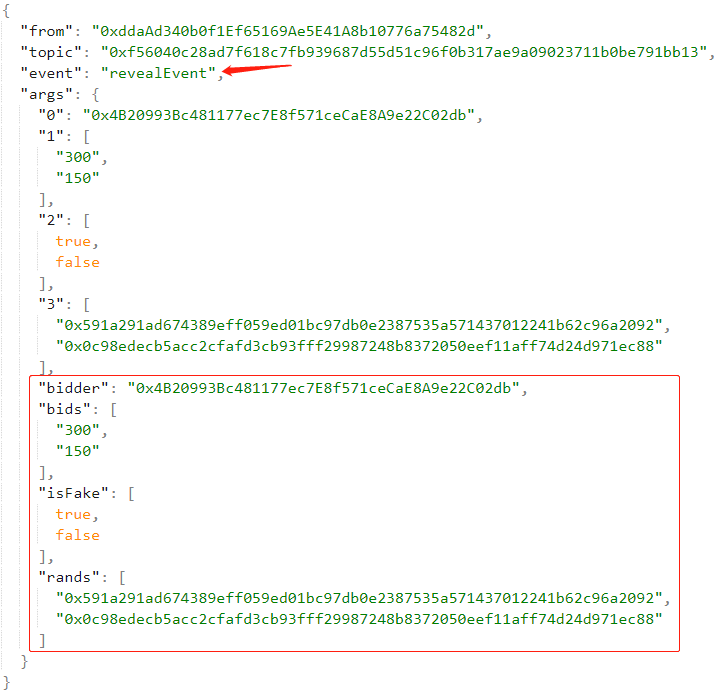}}
	\caption{The reveal made by bidder 2.} \label{Pic:BiddingReveal}
\end{figure}

\subsection{Performance Analysis of The Proposed PoT Consensus Algorithm}\label{subsec:tvPoW}
In this part, we evaluate the performance of the proposed PoT consensus algorithm. We first discuss how to set the initial mining difficulty at the beginning when every node's trust value is 0. We simulate the mining process in our personal computer as a reference to explain how to select a suitable mining difficulty. We use a given string to represent the transactions in a block in the blockchain network. The result is shown in Fig. \ref{Difficulty}. The running time increases exponentially when the number of leading zeros is larger than 20. Supposing that the hash rate of the IoT devices is similar to that of our personal computer, the interval between two blocks $T_0$ is about 1 second, then the mining difficulty should be less than 18 leading zeros, the $\beta_{0}$ is about $2^{18}=262144$.
\par It’s worth mentioning that even under the same mining difficulty and experimental environment, the mining time may be different. For example, given the mining difficulty where 32 leading zeros are needed in the target hash, it sometimes take tens of seconds, and sometimes can take more than an hour. In order to eliminate the impact of luck, we convert the mining difficulty into the \textit{expected mining cost} and evaluate the expected mining cost instead of running time. Theoretically, assuming that the output of each hash calculation is unpredictable, each time when we increase the leading zero by one, the success probability of each hash trial will be halved. 
The trust value is associated with the mining difficulty. Therefore, we first convert the trust value into the number of leading zeros, and then the number of leading zeros is converted into the expected mining cost proportionally. Simulation is conducted for 1000 time slots, and the \textit{average expected mining cost} of every type of nodes is calculated in Fig.\ref{TVPoW}. 
To be specific, we simulate a network with 20 nodes and 4 types of nodes are considered.
\textit{Rnode} means a node with high performance and always act honestly; 
\textit{UAnode} means a node with high performance and act honestly, however it participates in sensing infrequently. 
\textit{OOnode} is a node who conducts malicious behavior periodically. We assume that \textit{OOnode} will act maliciously after every two normal sensing rounds. 
The \textit{Lnode} is a node who randomly provides binary sensing result without sensing. 
There are 12 \textit{Rnode} with probability of detection $p_d$ = 0.90 and probability of false alarm $p_f$ = 0.15 \cite{KangCR};
3 \textit{OOnode} whose $p_d$ = 0.90 and $p_f$ = 0.15; 
3 \textit{Lnode} with $p_d$ = 0.5 and $p_f$ = 0.5; and 
2 \textit{UAnode} with $p_d$ = 0.90 and $p_f$ = 0.15. Besides, it is assumed that they only have a 50\% chance to sign up in CSS each sensing round. 

\par In Fig.\ref{TVPoW}, it can be observed that the reliable node performs best with respect to the average expected mining cost per timeslot, which is about one third of the other three types of nodes. The expected mining cost of a on-off node is a little smaller than \textit{Lnode} and \textit{UAnode} but still cost three times than \textit{Rnode}. This proves the effectiveness of our proposed consensus algorithm.

\begin{figure}[tb]
	\centering
	\includegraphics[width=3.5in]{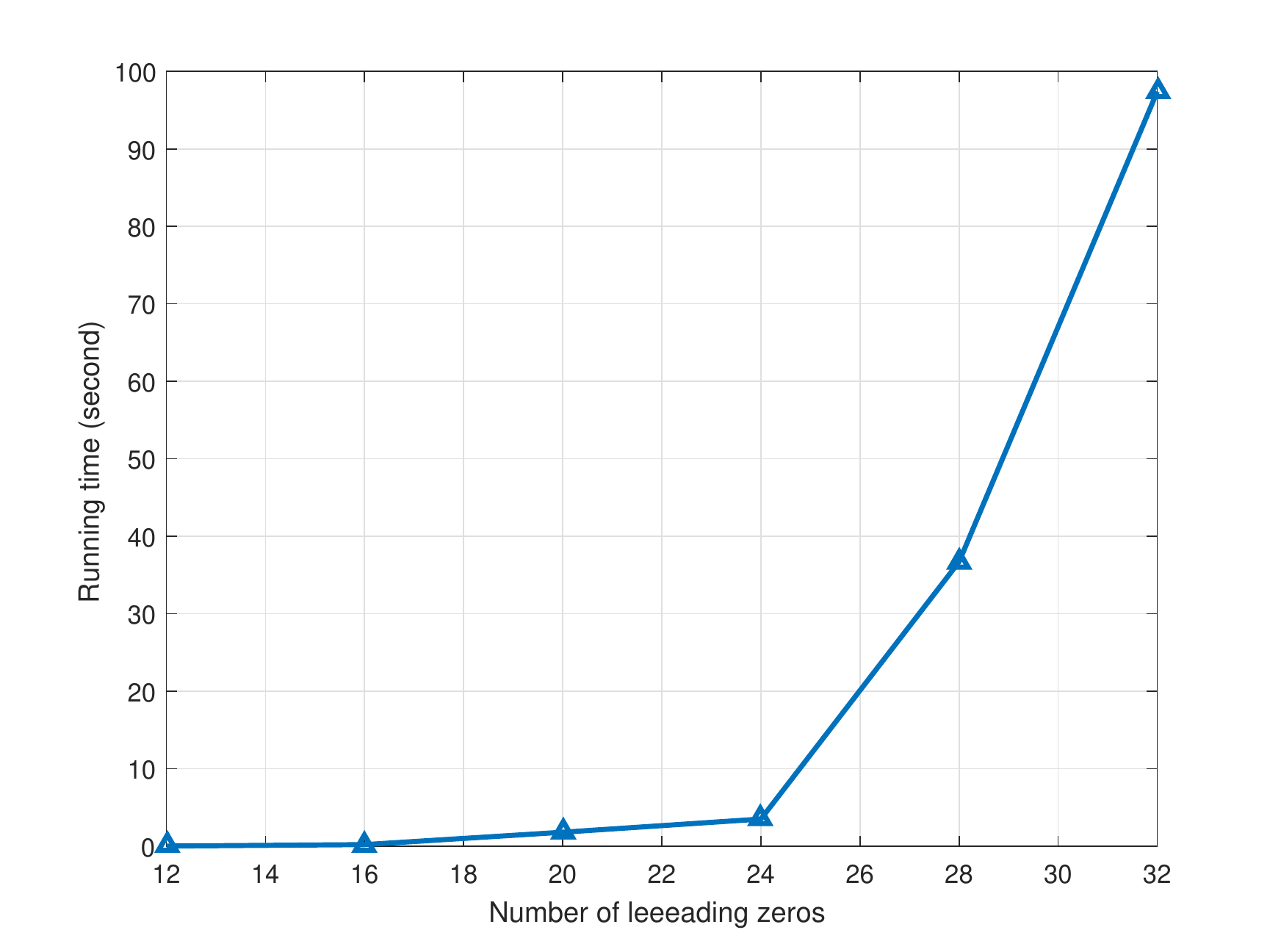}
	\caption{Running time (second) with increasing number of leading zeros} \label{Difficulty}
\end{figure}

\begin{figure}[tb]
	\centering
	\includegraphics[width=3.5in]{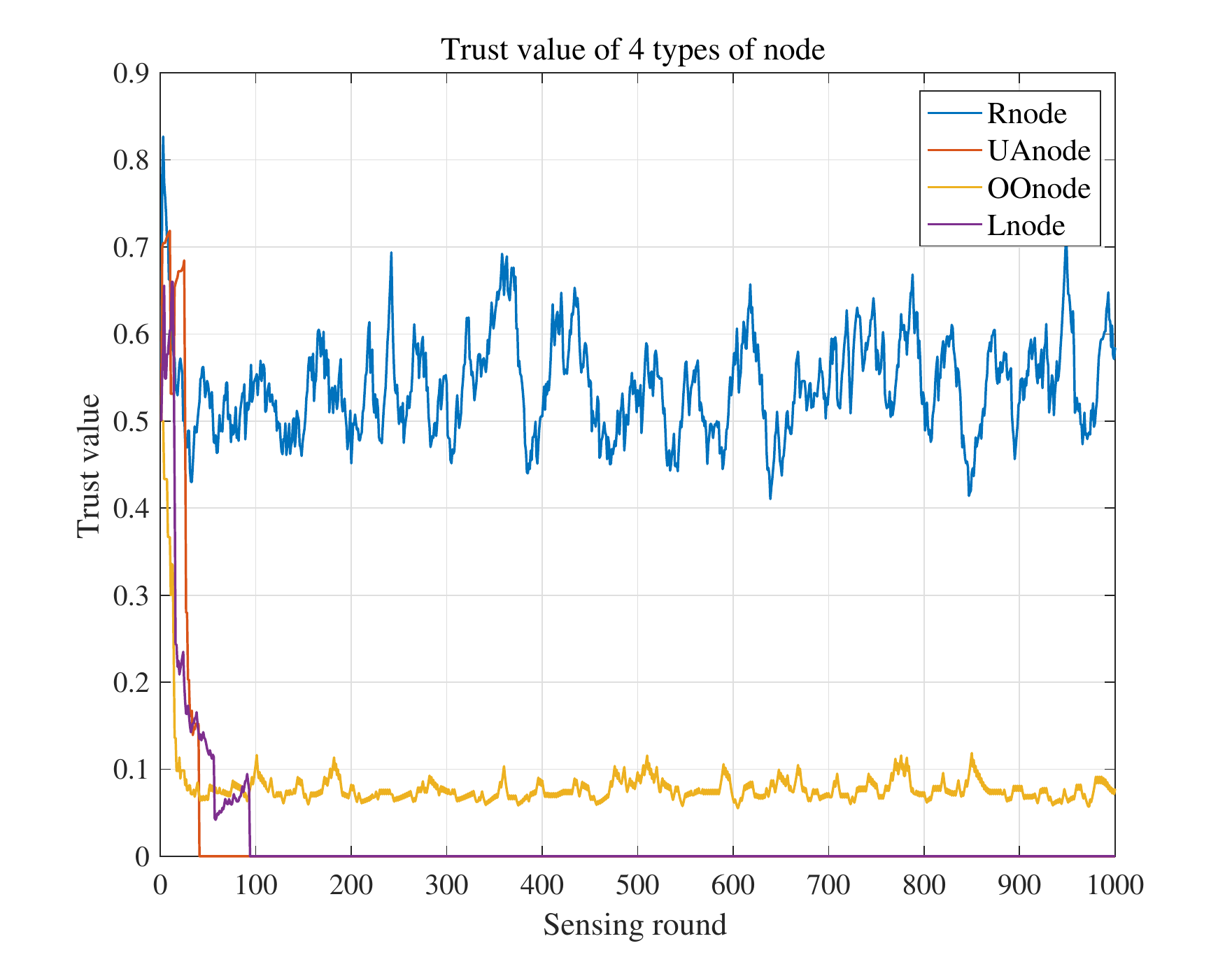}
	\caption{Curve of trust value}\label{TViPd}
\end{figure}

\begin{figure}[htbp]
\centering
\includegraphics[width=3.5in]{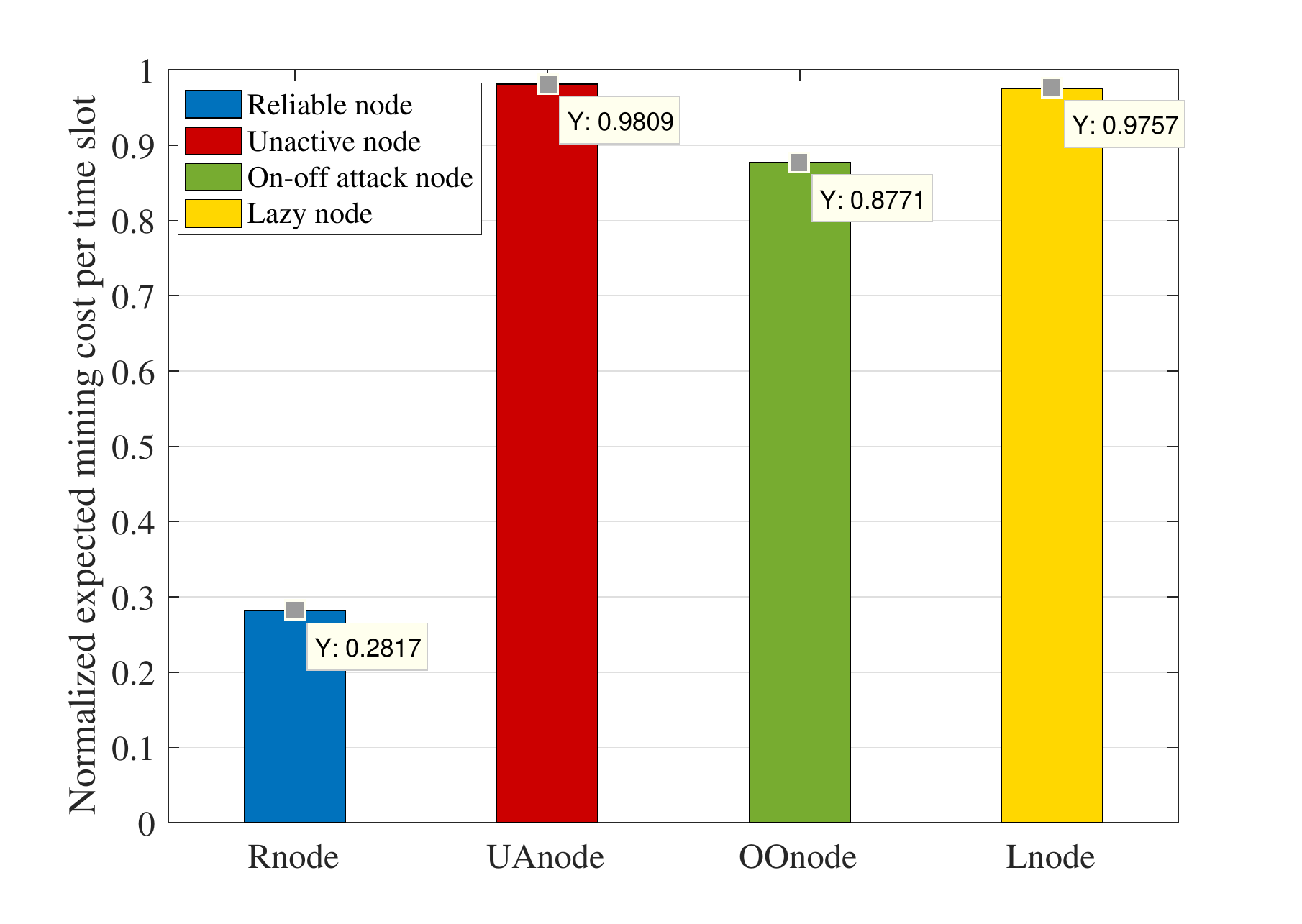}
\caption{Expected mining consumption of different nodes} \label{TVPoW}
\end{figure}

\subsection{Performance of Cooperative Sensing}\label{Subsec:Security analysis}
In this part, we evaluate the performance of cooperative spectrum sensing with the help of our proposed trust evaluation mechanism. In traditional blockchain network, there is no trust among nodes hence anyone can record the sensing information into blockchain, including nodes with poor performance or malicious behavior. However, due to the introduction of the trust value mechanism, the cooperative sensing contract can exclude bad nodes according to nodes' trust value, thus the system's sensing performance can be improved. 
For comparison, we consider 3 kinds of selection schemes to select the candidate nodes for the cooperative sensing. (i) random selection scheme; (ii) select according to register time; (iii) select according to sensor's trust value. We simulate a 20 nodes network with the same setup in \ref{subsec:tvPoW}. 

It can be seen from Fig.\ref{CoSensingPerformance} that when the number of needed sensors $N_1$ is small, $p_d$ and $p_f$ of cooperative sensing in the first two schemes which do not take advantage of trust value perform worse than that of the last selection scheme. This indicates that our proposed trust value mechanism can effectively improve the cooperative sensing performance of the system.
Moreover, in the last selection scheme, $p_d$ is close to $1$ when $N_1$ is about one fourth of all network nodes. The fewer nodes involved, the smaller the total monetary reward that the network needs to pay to sensors, and the smaller the economic burden for spectrum buyers.
As $N_1$ grows, the performance of different schemes will gradually close, since most of the network nodes (including bad nodes) will participate in cooperative sensing.

\begin{figure*}[htbp]
\centering
\subfigure[$P_d$ of cooperative sensing]{ 
	\includegraphics[width=3.4in]{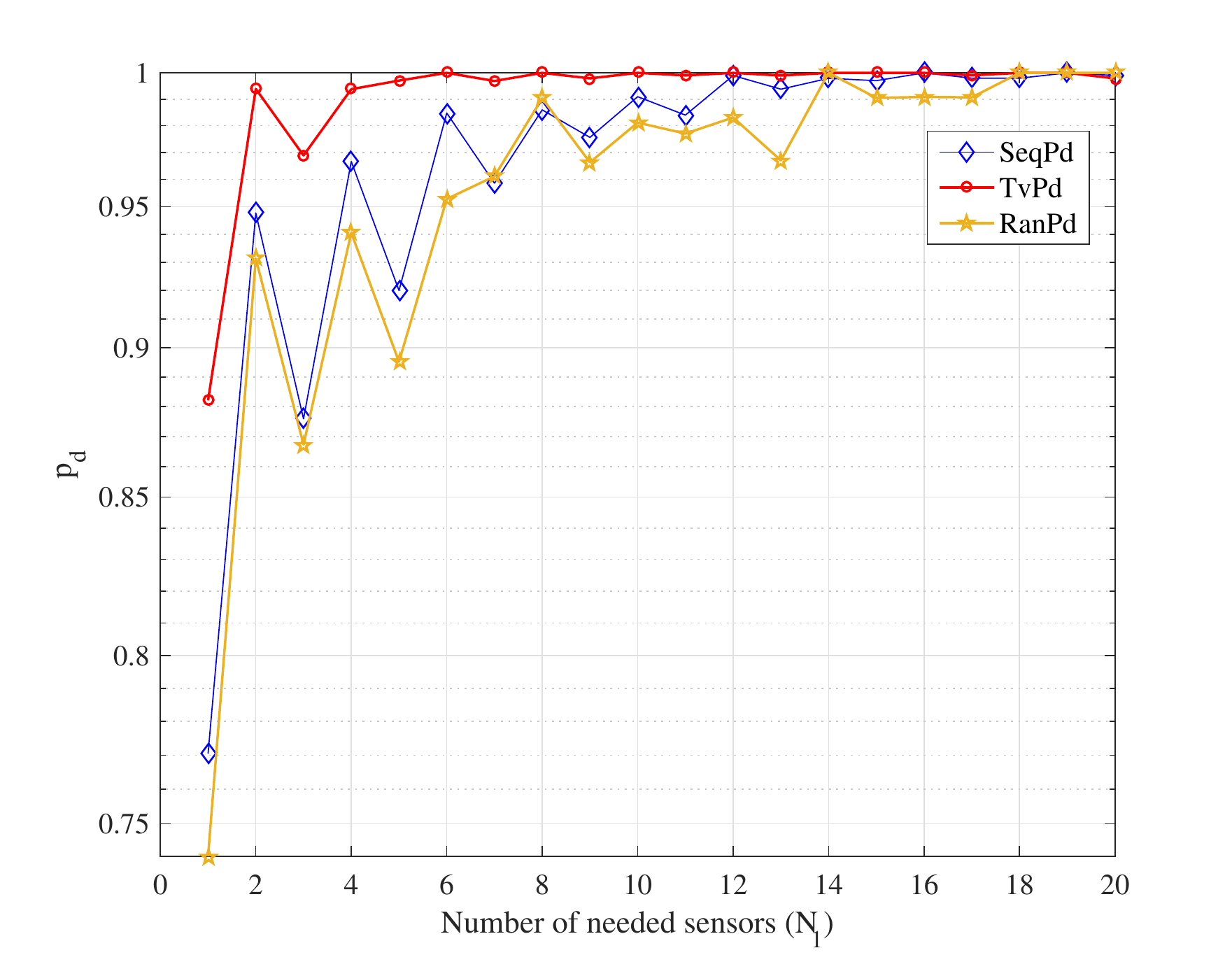} }
 \subfigure[$P_f$ of cooperative sensing]{ 
 	\includegraphics[width=3.42in]{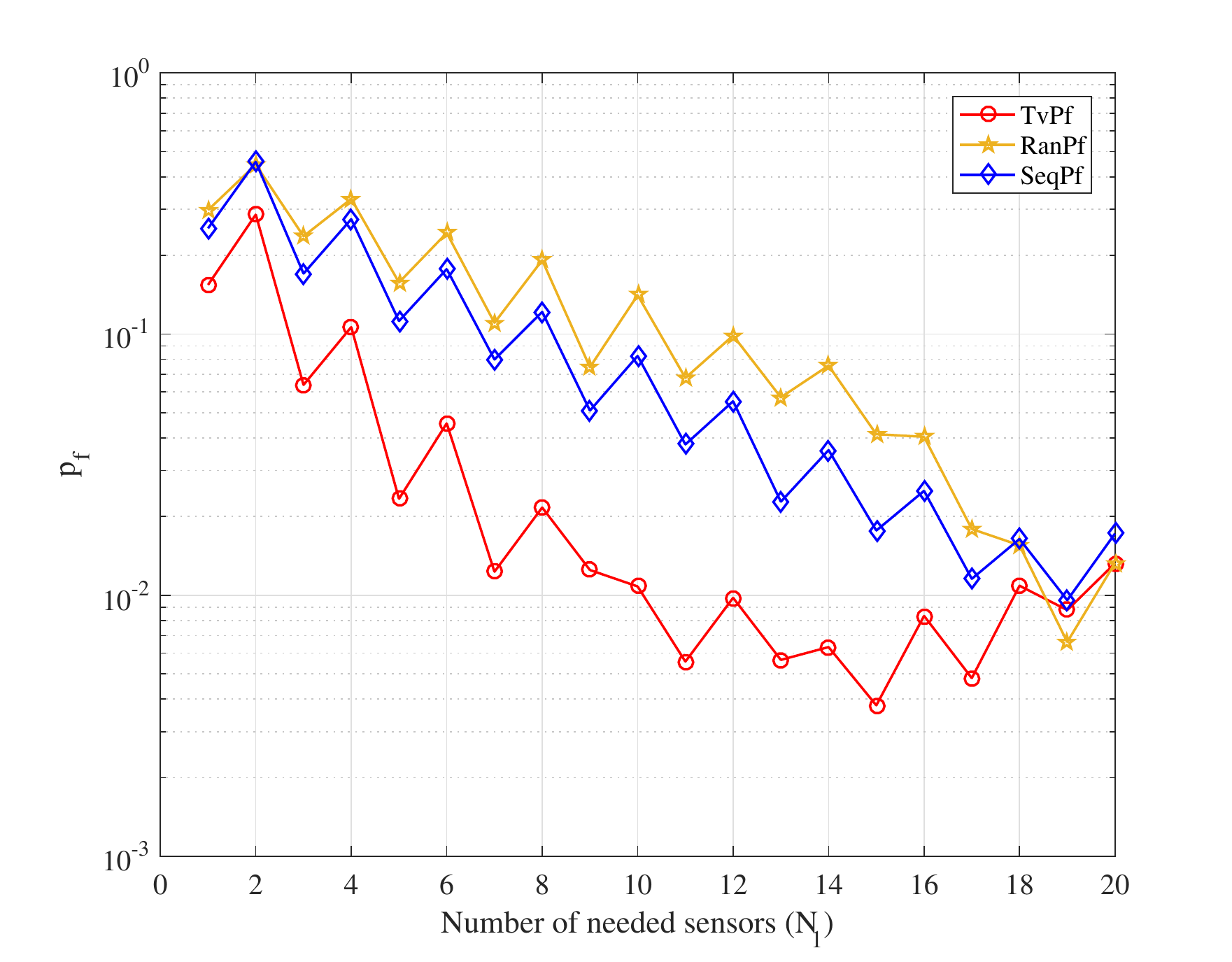} } 
 \caption{Performance of cooperative sensing under three selection schemes} \label{CoSensingPerformance}
\end{figure*}

\subsection{Incentive Mechanism Analysis}\label{Subsec:Incentive mechanism}
\par Since both sensing and mining will consume SU's computing power, an incentive mechanism is needed to reward nodes for the work they have done. The users who upload a sensing result that is consistent to the final cooperative sensing result will be rewarded with $R_s$ tokens; The users who successfully mine will be rewarded with $R_m$ tokens. The tokens can be used in auction to bid for spectrum resources. In implementation, the tokens rewarded for accurate spectrum sensing and successful mining, i.e., $R_s$ and $R_m$, need to be designed based on the evaluation of the computation consumption of spectrum sensing and mining.
\par This incentive mechanism not only encourages the nodes to participate in spectrum sensing, but also encourage them to behave honestly and accurately in spectrum sensing. This is because, on the one hand, an honest and accurate sensing node is more likely to derive a sensing result that is  consistent to the final cooperative sensing result and obtain tokens rewarded for spectrum sensing. On the other hand, with the proposed consensus algorithm, the node with a higher trust value is easier to mine successfully so that they have a higher probability of obtaining tokens for mining. On the contrary, the dishonest nodes will be discouraged since they are less likely to obtain rewards since they need to spend more computing resources on mining.
\subsection{Security Analysis}\label{sec:Security analysis}
\begin{itemize}
	\item \emph{Distributed Denial of Service (DDoS) Attack}: The DDoS attack here means that malicious users try to make the sensing service unavailable to other users. Our system is resist to this attack since we adopt the deposit mechanism. Thus, under our scheme, the cost of launching large-scale DDoS attack is very high since the attackers need to obtain lots of tokens in order to launch the attack. 
	\item \emph{Spoofing Attack}: Spoofing attack means someone tries to masquerade others to create forged transactions. Secure Elliptic Curve Digital Signature Algorithm (ECDSA)\cite{ECDSA} used in our blockchain can prevent this attack on the premise that attack does not have the user's private key.
	\item \emph{Free-riding Attack (lazy node)}: The Free-riding attack here means lazy users may directly copy others' sensing results at the phase of uploading sensing data. Firstly, there is no motivation for sensors to submit the sensing result before sensing deadline. Thus, when the lazy nodes get the sensing result, it is difficult for them to repack the sensing result and submit before the deadline. Secondly, even if a few sensors submit sensing results in advance, the connection between a user's identity and sensing data is cut by the ring signature, thus the lazy users cannot determine the owner of the sensing data, and thus the credibility of the data cannot be guaranteed. Thus, it is no better than submit a sensing result randomly. 
	\item \emph{On-off Attack}: On-off attack means that a node performs malicious behaviors periodically. If a trust management mechanism satisfy the condition that the dropping rate of trust value is larger than its increasing rate, it is considered to be resist to the on-off attack \cite{KangTV}. However, the model in \cite{KangTV} may cause the trust value drop a lot even the misbehavior is unintentional, since the increasing rate is much smaller than dropping rate. In our proposed model, as show in Fig.\ref{On-Off}, the unintentional mistakes made by sensors can be compensated by making right sensing decision. When the parameters $\tau$ and $\eta$ satisfy $\rho \textgreater \frac{\eta}{1-e^{-\eta}}-\eta$, our proposed trust value management mechanism is considered to be resist to the on-off attack, the proof is given in Appendix \ref{Appendix}. 
\end{itemize}

\begin{figure*}[htbp]
	\centering
	\subfigure[Existing model]{ 
		\includegraphics[width=3.4in]{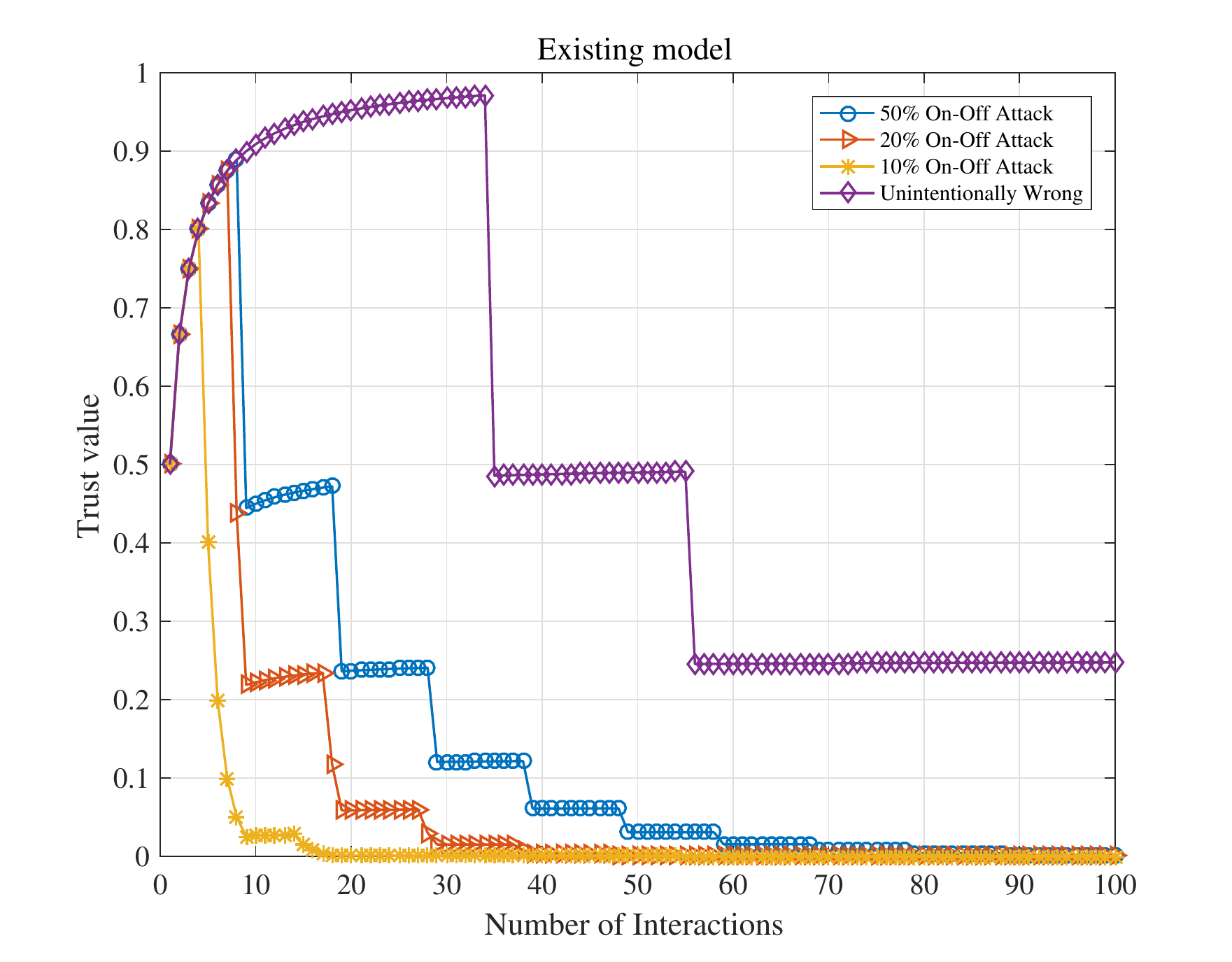} }
	\subfigure[Proposed model]{ 
		\includegraphics[width=3.4in]{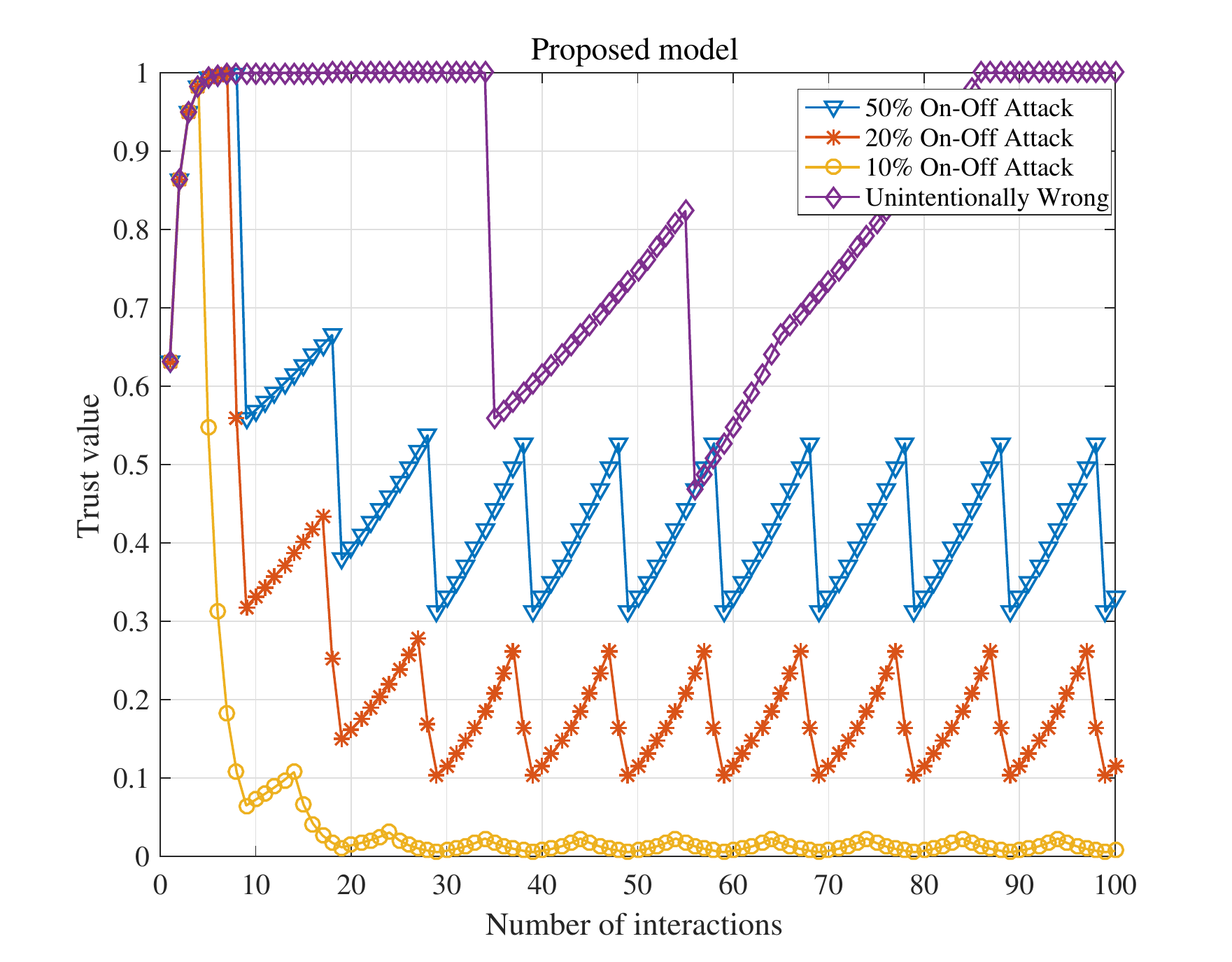} }
	\caption{Trust value of different types of nodes under two model} 
	\label{On-Off}
\end{figure*}

\section{Conclusion} \label{sec:conclusion}
In this paper, we have proposed a blockchain-based dynamic spectrum sharing protocol.  
This protocol mainly consists of three parts: the first part is the trust value management mechanism, which is designed to evaluate the credibility of nodes; the second part is the PoT consensus algorithm, which make the mining difficulty for malicious nodes greatly increased. The combination of node's trust value and their mining difficulty can motivate nodes to be more willing to behave honestly; the third part is the privacy protection mechanism, in which we combine the ring signature and the commit-reveal scheme to solve the problem of privacy issue in the process of cooperative spectrum sensing.
Finally, we implemented the prototype of our proposed smart contracts and analyzed the performance of PoT consensus algorithm and the improvement in cooperative sensing. Security analysis of the system show that our framework can resist many kinds of attacks which are frequently encountered in trust-based blockchain systems. 
\IEEEpeerreviewmaketitle

\appendices
\section{Proof of the resistance of on-off attack}\label{Appendix}
\begin{theorem}
The proposed trust model given in (\ref{con:tvc})
is resistant to on–off attack when $\rho \textgreater \frac{\eta}{1-e^{-\eta}}-\eta$.
\end{theorem}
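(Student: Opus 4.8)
The plan is to follow the on--off resistance criterion invoked in the security analysis and borrowed from \cite{KangTV}: a trust model resists the on--off attack precisely when, at every reachable state, the instantaneous rate at which a misbehavior \emph{lowers} the trust value dominates the rate at which an honest report \emph{raises} it. Concretely, I would identify the \emph{increasing rate} with $\partial TV_i^{(1)}/\partial N_{i,r}$ and the \emph{dropping rate} with $-\partial TV_i^{(1)}/\partial N_{i,w}$, treating the two counters $N_{i,r}$ and $N_{i,w}$ as the independent variables of the trust function in (\ref{con:tvc}). The whole statement then reduces to a single pointwise inequality between these two partial derivatives.

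First I would differentiate (\ref{con:tvc}), which gives
\[
\frac{\partial TV_i^{(1)}}{\partial N_{i,r}} = \eta\, e^{-\rho N_{i,w}} e^{-\eta N_{i,r}}, \qquad -\frac{\partial TV_i^{(1)}}{\partial N_{i,w}} = \rho\, e^{-\rho N_{i,w}}\bigl(1-e^{-\eta N_{i,r}}\bigr).
\]
The key simplification is that both expressions carry the common positive factor $e^{-\rho N_{i,w}}$, so it cancels and the resistance requirement becomes independent of $N_{i,w}$, namely
\[
\rho\bigl(1-e^{-\eta N_{i,r}}\bigr) > \eta\, e^{-\eta N_{i,r}}.
\]

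Next I would argue that it suffices to impose this inequality at the smallest admissible value of the remaining variable. Writing $D(N_{i,r})=\rho(1-e^{-\eta N_{i,r}})-\eta e^{-\eta N_{i,r}}$, one checks that $D'(N_{i,r})=\eta e^{-\eta N_{i,r}}(\rho+\eta)>0$, so $D$ is strictly increasing; hence the binding constraint is at the minimum count an attacking node can hold while still possessing trust to exploit, namely $N_{i,r}=1$ (a node with $N_{i,r}=0$ has $TV_i^{(1)}=0$ and nothing to gain). Substituting $N_{i,r}=1$ and solving for $\rho$ yields
\[
\rho > \frac{\eta\, e^{-\eta}}{1-e^{-\eta}} = \frac{\eta}{1-e^{-\eta}} - \eta,
\]
which is exactly the claimed threshold.

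I expect the main obstacle to be the justification of the reduction to a single point rather than the algebra: one must confirm the monotonicity of $D$ and argue carefully that $N_{i,r}=1$ is the worst case relevant to an on--off adversary, who first accumulates honest rounds and only then defects. A secondary delicacy is that the counters increment in integer steps, so one could in principle replace the derivatives by the forward differences of (\ref{con:tvc}); I would note that the derivative comparison is the standard formalization of the \cite{KangTV} criterion and that it is the version producing the stated closed-form bound.
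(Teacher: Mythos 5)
Your proposal is correct and follows essentially the same route as the paper's own proof: the same partial derivatives of (\ref{con:tvc}), the same cancellation of the common factor $e^{-\rho N_{i,w}}$, the same exclusion of the $N_{i,r}=0$ case, and the same reduction to the worst case $N_{i,r}=1$ (your monotonicity argument for $D$ is just the explicit form of the paper's observation that $\frac{1}{1-e^{-\eta N_r}}-1$ is maximized at $N_r=1$). The closed-form threshold $\rho > \frac{\eta e^{-\eta}}{1-e^{-\eta}} = \frac{\eta}{1-e^{-\eta}}-\eta$ matches the statement exactly.
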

\begin{proof}
	Let $f(N_r,N_w) = e^{-\rho\cdot N_w}\cdot \left(1-e^{-\eta \cdot N_r}\right)$ denotes the trust value update function in (\ref{con:tvc}). Because when $N_r=0$, $f(N_r,N_w)$ can only increase, this situation is not discussed here. At any other points $(N_r,N_w)$ in this function, the decreasing rate should be larger than the increasing rate, that is: 
	\begin{equation}
		\left\lVert \frac{\partial f}{\partial N_w} \right \rVert>\left\lVert \frac{\partial f}{\partial N_r} \right\rVert,
	\end{equation}
	where 
	\begin{equation}
	\left\lVert \frac{\partial f}{\partial N_w} \right \| = \rho \cdot e^{-\rho \cdot N_w}\cdot\left(1-e^{-\eta \cdot N_r}\right)
	\end{equation}
and 
	\begin{equation}
	\left \| \frac{\partial f}{\partial N_r} \right \| = \eta \cdot e^{-\rho \cdot N_w}\cdot \left(e^{-\eta \cdot N_r}\right),
	\end{equation}
	then we have 
	\begin{equation}
			\left \| \frac{\partial f}{\partial N_w} \right \|>  \left \| \frac{\partial f}{\partial N_r} \right \| \Leftrightarrow 
		\frac{\rho}{\eta}>\frac{e^{-\eta \cdot N_r}}{1-e^{-\eta \cdot N_r}}
	\end{equation}
	which can also be denoted as:
	\begin{equation}
		\left \| \frac{\partial f}{\partial N_w} \right \|> \left \| \frac{\partial f}{\partial N_r} \right \| \Leftrightarrow 
		\frac{\rho}{\eta}>\frac{1}{1-e^{-\eta \cdot N_r}}-1
	\end{equation}
	When $N_r \ge 1$, 
	\begin{equation}
		\frac{1}{1-e^{-\eta \cdot N_r}}-1 \le \frac{1}{1-e^{-\eta}}-1,
	\end{equation}
	Therefore, as long as it is satisfied $\frac{\rho}{\eta}>\frac{1}{1-e^{-\eta}}-1$, the above inequality holds.
\end{proof}








\end{document}